\newcommand{\stkout}[1]{\ifmmode\text{\sout{\ensuremath{#1}}}\else\sout{#1}\fi}
\newtheorem{theorem}{Theorem}[section]
\newtheorem{assumption}[theorem]{Assumption}
\newtheorem{corollary}[theorem]{Corollary}
\newtheorem{definition}[theorem]{Definition}
\newtheorem{example}[theorem]{Example}
\newtheorem{lemma}[theorem]{Lemma}
\newtheorem{proposition}[theorem]{Proposition}
\newcommand{\bb}[1]{\boldsymbol{#1}}
\newcommand{\E}[1]{\mathbb{E}[{#1}]}
\tikzset{  
    -Latex,auto,node distance =1.5 cm and 1.3 cm, thick,
    state/.style ={circle, draw, minimum width = 0.9 cm}, 
    state2/.style ={draw=white, minimum width = 0.9 cm}, 
    point/.style = {circle, draw, inner sep=0.18cm, fill, node contents={}},  
    bidirected/.style={Latex-Latex,dashed}, 
    el/.style = {inner sep=2.5pt, align=right, sloped}  
}
\providecommand{\keywords}[1]
{
  \textbf{\textit{Keywords---}} #1
}
\begin{document}

\title{Propagating moments in probabilistic graphical models with polynomial regression forms for decision support systems
}

\author[1]{Victoria Volodina}
\author[2]{Nikki Sonenberg}
\author[1]{Peter Challenor}
\author[3]{Jim Q. Smith}
\affil[1]{Department of Mathematics and Statistics, University of Exeter, Exeter, United Kingdom}
\affil[2]{Heilbronn Institute
for Mathematical Research, University of Bristol, Bristol, United Kingdom}
\affil[3]{Department of Statistics, University of Warwick, Coventry, United Kingdom}

\date{}
\maketitle

\begin{abstract}
Probabilistic graphical models are widely used to model complex systems under uncertainty. Traditionally, Gaussian directed graphical models are applied for analysis of large networks with continuous variables as they can provide conditional and marginal distributions in closed form simplifying the inferential task. The Gaussianity and linearity assumptions are often adequate, yet can lead to poor performance when dealing with some practical applications. In this paper, we model each variable in graph $\mathcal{G}$ as a polynomial regression of its parents to capture complex relationships between individual variables and with a utility function of polynomial form. We develop a message-passing algorithm to propagate information throughout the network solely using moments which enables the expected utility scores to be calculated exactly. Our propagation method scales up well and enables to perform inference in terms of a finite number of expectations. We illustrate how the proposed methodology works with examples and in an application to decision problems in energy planning and for real-time clinical decision support.

\end{abstract}

\keywords{Bayesian networks, Polynomial regression models, Propagation algorithm, Decision support system}

\section{Introduction}

In recent years, probabilistic graphical models have been widely used to describe complex dependencies among variables in various fields including ecology \cite{Arif2023}, biology \cite{Ni2022}, epidemiology \cite{Tennant2021}, and decision-support systems \cite{Barons2021,Shenvi2023}. 
Here a probabilistic graphical model is a directed acyclic graph (DAG) that consists of two components: variables represented by vertices in a graph $\mathcal{G}$, and dependencies by edges connecting pairs of vertices, and associated conditional probability distributions. These two components are assumed to represent our knowledge of the system and can be used to perform inference about the system \cite{Cowell2007,Lauritzen1988}. 

Within a decision-support context, a Bayesian analysis may focus on obtaining a decision maker's and multiple \emph{panel} of experts' beliefs about individual variables or propagating the effects of received evidence through the graph, coherently revising beliefs in the vertices that are still not established. To do so, fast calculation of conditional and marginal probabilities on variables is required, motivating efficient algorithms for propagating information through the DAG. These computations are significantly simplified by the conditional independence assumption embedded in the graphical model, where the joint density probability function of variables in $\mathcal{G}$ can be written as a product of the individual density functions, conditional on their parent nodes. 

However, it has been demonstrated that the evaluation of conditional and marginal probabilities can become cumbersome for large networks with discrete variables, thus \cite{Lauritzen1988} proposed efficient algorithms for transfer between representations for the joint probability distribution to perform conditioning and marginalisation. When considering continuous variables, it is common to use a Gaussian directed graphical model, where the conditional probability model for each variable is a linear Gaussian model of its parents. This assumption provides a one-to-one correspondence between a multivariate Gaussian distribution and the Gaussian directed graphical model and allows us to easily obtain the marginal and conditional distributions. 
 
Interest in learning the DAG from data goes back many years \cite{Spirtes2000}, and recently there has been an increasing interest in the learning of a Gaussian DAG from observational data using machine learning techniques \cite{Castelletti2020,Gao2022}.
The assumption of Gaussianity for all variables in the system can prove restrictive when dealing with practical problems, as regression relationships in DAG contain no interaction terms, and explanatory effects between variables is only seen through the effect of different variables on the mean of the other, not for example, their covariance.

However, the marginal posterior distributions of individual variables can become analytically intractable once we relax the Gaussianity assumption. To the best of the authors' knowledge, there are no general and closed-form methods available for propagating information in settings where the conditional probability distributions are not Gaussian. Some approximate methods have been proposed. For instance, prior to performing inferential tasks, we can perform discretisation of variables \cite{Aguilera2011} and treat them as discrete random variables. We note that the accuracy of the inference largely depends on the discretisation band, and that the computation can become impractical for large networks.  Another group of approximate methods based on simulation, where a sample of the variables in the network is obtained via Monte Carlo simulation, and then the marginal probability distributions are estimated from it \cite{Aguilera2011, Salmeron2000}. Similar to discretisation, these methods do not scale up well.

Our method is derived from the idea that message-passing algorithms do not depend upon the underlying distributions of the variables of the system \cite{Cowellbook,Jensen2007,Smith2010}, and can be written in terms of moments rather than distributions \cite{Lauritzen1992}. In this paper, we develop a new inferential method for probabilistic graphical models capable of capturing complex relationships between individual variables and their parents. In this approach, each variable in $\mathcal{G}$ is defined as a polynomial regression model of its parents. Then by applying simple algebraic techniques combined with properties of conditional expectation, we obtain the moments required to be transferred throughout the system to obtain the corresponding moments of targeted marginal distributions. This approach is distribution free, and we do not require the full mixing (convergence) associated with the full probability distribution of each component model, as compared to Markov chain Monte Carlo (MCMC). In addition, we develop a message-passing algorithm to enable efficient propagation of information throughout the system for a fast way of calculating expected utility scores for different options for the utility-maximising decision-maker.

Our inferential method is especially relevant in the decision-support context, as part of Bayesian integrating decision support systems (IDSS) \cite{Leonelli2018,Smith2015}, where each vertex represents an individual component model developed by a \emph{panel} of experts. Here we assume all the panels agree on a polynomial form of utility function. This assumption is made for two reasons. Firstly, its decomposability allows us to consider each term separately and therefore simplifies the message-passing process. Secondly, when the utility function is only approximately polynomial, we can consider only the first few terms in the Maclaurin expansion as long as the approximating error term is not too large. In this case, the focus is on propagating information forward to the margins of the attributes, the random vector of arguments of a utility function, to compute the expected utilities.

The article has the following structure. In Section \ref{sec:background} we introduce notation and the mathematical preliminaries for a single panel in the network model.  We define and consider different classes of regression models in Section \ref{sec:graphical} and comment on the complexity of a single panel. In Section \ref{sec:momentsuff} we present a message-passing algorithm for the calculation of the expected utility and provide formulas for the required moment computations. In Section \ref{sec:application} we apply our method to two applications, energy planning and a healthcare setting where planning and treatment decisions are required. Section \ref{sec:discussion} contains concluding remarks.

\section{Polynomial regression model} \label{sec:background}
\subsection{Preliminaries}
Since the focus of this paper is to develop an inferential methodology in the context of decision support, we start by defining $m$ models developed by panels $\left\{ Q_{1},Q_{2},\dots,Q_{m}\right\} $ indexed by the partial order generated by a graph $\mathcal{G}$ and letting $Q_{m+1}$ represent the decision maker. We assume that panels produce outputs 
$\bb{Y} = \{Y_{1}, Y_{2}, \dots, Y_{m} \}$ 
and can receive inputs  $\boldsymbol{Y}_{\Pi_{j}}$, where
 $\Pi_j$ is the set of indices such that $\boldsymbol{Y}_{\Pi_j} \subseteq \{Y_{1},Y_{2},\dots,Y_{j-1}\}$,
 where each variable in the set $\boldsymbol{Y}_{\Pi_{j}}$ is a parent of $Y_j$. For each decision $d\in\mathcal{D}$, we define the utility function that takes the form 
\begin{equation*}
    U(\bb{Y}, d)=\sum_{i=1}^K k_iU_i(\bb{Y}_{H_i}, d),
\end{equation*}
where $k_i\in (0, 1)$ are the criterion weights agreed by all panels and $U_i(\bb{Y}_{H_i}, d)$ is the function of $\bb{Y}_{H_i}$ with indices $H_i\subseteq \{1, 2, \dots, m \}$ which can contain polynomial terms and interactions. 
To inform the decision-making, we need to provice the set of expected utility (EU) scores $\{\overline{U}(d): d\in\mathcal{D}\}$, so that 
we can then recommend
$ d^{\ast}=\arg\max_{d\in\mathcal{D}}\overline{U}(d)$, where the expected utility is given by
\begin{align*}\label{eqn_util1}
\overline{U}(d)=\int \overline{U}(d|\bb{\theta}) \pi(\bb{\theta}|d) d\bb{\theta},
\end{align*}
and the conditional expected utility is
\begin{equation*}
    \overline{U}(d|\bb{\theta})=\int U(\bb{y}, d)f(\bb{y}|\bb{\theta}, d) d\bb{y}.
\end{equation*}
If $U(\bb{Y},d)$ is linear in $\bb{Y}$,  to find $d^{\ast}$ only a vector of means is required for each $d$. 
If $U(\cdot,d)$ is polynomial of degree $n$, it can be shown that only $n^\text{th}$ moments of key output variables of $f(\bb{y}|\bb{\theta}, d)$ are required. If $U(\cdot,d)$ is nonlinear, then the expectations are computed numerically. In this paper, we assume the utility function has a polynomial form.

Following conditional independence assumptions, any Bayesian network (BN) $\mathcal{G}$ over continuous random variables $\bb{Y}$ can be represented in terms of a factorisation of conditional probability densities, given by the form
\begin{equation}
\label{eq:py_dynamic}
  f(\bb{y}|\bb{\theta}, d)=\prod_{j=1}^m f_j(y_j|\bb{y}_{\Pi_j}, \bb{\theta}_j, d).
\end{equation}
We note that the graph $\mathcal{G}$ and the joint probability density $f(\bb{y}|\bb{\theta}, d)$ define the network model. In order to specify the family of joint distributions of $\bb{Y}$ consistent with $\mathcal{G}$, we need to provide candidate distributions of $  [Y_j|\bb{Y}_{\Pi_j}=\bb{y}_{\Pi_j} ]$, for $j=1, 2, \dots, m$, that can be of any form. This makes the proposed framework generic. 

We proceed to define the models using monomial notation \cite{Cox2005,Pistone2000}. Each $Y_j|\bb{Y}_{\Pi_j}$ can be written as a regression model
\begin{equation}
    Y_j|\bb{Y}_{\Pi_j}=\sum_{i=1}^{\infty} \theta_{ji}\phi_{ji}(\bb{Y}_{\Pi_j}) + v_j,
\end{equation}
where $\{\theta_{ji}, i =1, 2, \dots \}$ is the set of parameter vectors, 
 $\{\phi_{ji}(\bb{Y}_{\Pi_j}), i=1, 2, \dots \}$ is a countably infinite basis, which in this paper is considered to be polynomial (another example is the Fourier basis),  and $v_j$ is an observation error independent from $\bb{Y}_{\Pi_j}$ and regression coefficients. A polynomial regression model for $Y_j$ can be written as
 \begin{align}  
  \label{eq:polynomial_reg}  
Y_{j}|\bb{Y}_{\Pi_j}&=\sum_{\bb{a}_j\in A_{j}}\theta_{j\bb{a}_j%
}\boldsymbol{Y}_{\Pi_j}^{\bb{a}_j}+v_{j}\text{ with monomial }  \boldsymbol{Y}_{\Pi_j}^{\bb{a}_{j}} = \prod_{i\in \Pi_j}Y_{i}^{a_{ji}},
\end{align}
where $A_{j}$ is the set of the exponent vectors on the monomial covariates of $Y_j$. The terms in the monomial are ordered lexicographically in $\bb{a}_j$ according to the order generated by $\mathcal{G}$. We have the following two assumptions for the models.
\begin{assumption}
\label{assume:independence}
A joint prior is specified over $\{(\bb{\theta}_j, v_j), j=1, 2, \dots, m \}$, where $\bb{\theta}_{j}\triangleq\{ \theta_{j\bb{a}_j},\bb{a}_j\in A_j\}$, and that 
$\amalg_{j=1}^{m}v_{j}$ and
$\amalg_{j=1}^{m}\boldsymbol{\theta}_{j}$. 
\end{assumption}
\begin{assumption}\label{assume:moments}
    The moment-generating function of $\boldsymbol{Y}$ exists.
\end{assumption}
Under Assumption \ref{assume:moments}, we can compute the moments of $Y_j^b|\bb{Y}_{\Pi_j}$, which are uniquely defined in terms of the non-central moments of $\bb{Y}_{\Pi_j}^{\bb{a}_j}$, regression coefficients $\theta_{j\bb{a}_j}$ and 
the higher-order non-central moments of $v_j$, $j=1, 2, \dots, n$. 
We note that the exponent vector alone identifies a monomial, that is, the monomial $\bb{Y}_{\Pi_j}^{\bb{a}_j}$ can be represented  by $\boldsymbol{a}_j$.

 \begin{example}\label{ex:ex1}
    Consider the polynomial regression model for $Y_4$ as 
    $$Y_4 = \theta_{4,000}+ \theta_{4,100}Y_{1}+ \theta_{4,120}Y_{1} Y_{2}^2  + \theta_{4,002}Y_{3}^2 +v_4,$$
then $A_4 = \{(0,0,0),(1,0,0),$ $(1,2,0), (0,0,2)\}$ and $\boldsymbol{\theta}_{4\boldsymbol{a}_4}= (\theta_{4,000}, \theta_{4, 100}, \theta_{4, 120}, \theta_{4, 002})$. 
\end{example}
For the computation of the interaction terms, for $b \in \mathbb{Z}_{\ge 0}$  and some exponent vector $\boldsymbol{g}$, we consider
\begin{equation}\label{monoYX}
Y_j^b\bb{Y}_{\Pi_j}^{\bb{g}}=\big( \sum_{\bb{a}_j\in A_j}\theta_{j\bb{a}_j}\bb{Y}_{\Pi_j}^{\bb{a}_j} + v_j\Big)^b\bb{Y}_{\Pi_j}^{\bb{g}}.
\end{equation}
Further, for the expansion of the expression within the brackets, define $A_j^b$ as the exponent set of $Y_j^b$, then
\begin{align}\label{defn_Abj}
    A_j^b=\left\{  \boldsymbol{a}:  \boldsymbol{a} = \boldsymbol{a}_i+\boldsymbol{a}%
_j+\cdots+\boldsymbol{a}_k,\text{where }\boldsymbol{a}_i,\boldsymbol{a}%
_j,\ldots,\boldsymbol{a}_k\in A_{j} \text{ and } S_n=b\right\},
\end{align}
where $S_n$ denotes the number of terms in the summation. Thus for a model with $\ell$ monomial terms to the power of $b$, this will have $\binom{\ell+b-1}{b}$ elements. As the multiplication of monomials is achieved by adding exponents, we also define 
\begin{align}
A_j^b+\left\{  \boldsymbol{g}\right\} =\left\{  \boldsymbol{a}%
:\boldsymbol{a}=\boldsymbol{a}^{\prime}+\boldsymbol{g},\text{where
}\boldsymbol{a}^{\prime}\in A_j^b\right\}.
\end{align} 

\begin{example} Consider the polynomial regression model for $Y_4$ in Example \ref{ex:ex1}. By Equation (\ref{defn_Abj}), we can write down the set of exponents required for the calculation of $Y_4^2$ as listed in Table \ref{tab:exponents_A4}.

\begin{table}[h!]
\centering
\caption{Set of exponents $A_4^2$}
\label{tab:exponents_A4}
 \begin{tabular}{|c c |} 
 \hline
 $\bb{a}_1+\bb{a}_1=(0, 0, 0)$ & $\bb{a}_2+\bb{a}_3=(2, 2, 0)$ \\ 
$\bb{a}_1 +\bb{a}_2=(1, 0, 0)$ & $\bb{a}_2+\bb{a}_4=(1, 0, 2)$ \\
$\bb{a}_1+\bb{a}_3=(1, 2, 0)$ & $\bb{a}_3+\bb{a}_3=(2, 4, 0)$  \\
$\bb{a}_1+\bb{a}_4=(0, 0, 2)$ & $\bb{a}_3+\bb{a}_4 = (1, 2, 2)$  \\
$\bb{a}_2+\bb{a}_2 = (2, 0, 0)$ & $\bb{a}_4 +\bb{a}_4 = (0, 0, 4)$  \\ [1ex] 
 \hline
 \end{tabular}
\end{table}
\end{example}

  
 \subsection{Single panel calculations} \label{subsec:singlepanel}

 For a single panel overseeing the distribution of one of the response variables, we next demonstrate how to calculate non-central moments and present formulae for the predictive moments of the dependent variable and its associated regressors.
 
 Let $\mu_{j\bb{a}}$ denote $\mathbb{E}[\bb{Y}_{\Pi_j}^{\bb{a}}]$ and  $m_{i}(v_j)= \mathbb{E}\left(  v_{j}^{i}\right)$ specifies the non-central moments of the observation error, 
and we assume that $m_{1}(v_{j})=0$ and $m_{2}(v_{j})=\mathbb{V}(v_{j})=V_j$. Using this notation, we demonstrate how the required moments are computed in order for the decision-maker to calculate the utility scores.
 
\begin{proposition}
\label{th:polynomialder}
For a regression model $Y_j$, suppose $\bb{\theta}_j,v_{j}$ and $\boldsymbol{Y}_{\Pi_j}$ are all mutually
independent, for some integer $b$ and exponent vector $\bb{g}$,
\begin{equation}\label{eq:mod_prop27}
    \mathbb{E}[Y_j^b\bb{Y}_{\Pi_j}^{\bb{g}}] = \sum_{k=0}^b\binom{b}{k}m_k(v_j)\sum_{\substack{\bb{a}\in A_j^{b-k}\\ t_{\bb{a}_1} + t_{\bb{a}_2} + \dots +t_{\bb{a}_{|A_j|}} = b-k\\ t_{\bb{a}_1}, t_{\bb{a}_2}, \dots, t_{\bb{a}_{|A_j|}}\geq 0}}\binom{b-k}{t_{\bb{a}_1}, t_{\bb{a}_2}, \dots t_{\bb{a}_{|A_j|}}}\mathbb{E}\Big(\prod_{\bb{a}_i\in A_j, \bb{a}_i+\bb{a}_j+\dots +\bb{a}_m = \bb{a}}\theta_{j\bb{a}_i}^{t_{\bb{a}_i}} \Big)\mu_{j, \bb{a}'},
\end{equation}
with $
\bb{a}'=\bb{a} + \bb{g}$
and 
$$
\binom{b-k}{t_{\bb{a}_1}, t_{\bb{a}_2}, \dots t_{\bb{a}_{|A_j|}}} = \frac{(b-k)!}{t_{\bb{a}_1}! t_{\bb{a}_2}! \dots t_{\bb{a}_{|A_j|}}!}.
$$
\end{proposition}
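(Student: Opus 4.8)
The plan is to start from the defining identity (\ref{monoYX}), which already writes $Y_j^b\bb{Y}_{\Pi_j}^{\bb{g}}$ as a power of the regression form times a fixed monomial, and then to expand that power in two stages before taking expectations. First I would separate the observation error from the regression sum by the binomial theorem, writing $\big(\sum_{\bb{a}_j\in A_j}\theta_{j\bb{a}_j}\bb{Y}_{\Pi_j}^{\bb{a}_j}+v_j\big)^b=\sum_{k=0}^b\binom{b}{k}v_j^k\big(\sum_{\bb{a}_j\in A_j}\theta_{j\bb{a}_j}\bb{Y}_{\Pi_j}^{\bb{a}_j}\big)^{b-k}$. This isolates the factor $v_j^k$, whose expectation will eventually become $m_k(v_j)$, and reduces the problem to expanding a pure regression sum raised to the power $b-k$.

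The second stage is to apply the multinomial theorem to $\big(\sum_{\bb{a}_j\in A_j}\theta_{j\bb{a}_j}\bb{Y}_{\Pi_j}^{\bb{a}_j}\big)^{b-k}$. Indexing the $|A_j|$ terms and choosing nonnegative multiplicities $t_{\bb{a}_1},\dots,t_{\bb{a}_{|A_j|}}$ summing to $b-k$, each term contributes the multinomial coefficient $\binom{b-k}{t_{\bb{a}_1},\dots,t_{\bb{a}_{|A_j|}}}$, the coefficient product $\prod_{\bb{a}_i\in A_j}\theta_{j\bb{a}_i}^{t_{\bb{a}_i}}$, and the monomial $\bb{Y}_{\Pi_j}^{\sum_i t_{\bb{a}_i}\bb{a}_i}$. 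The crucial observation is that the combined exponent $\sum_i t_{\bb{a}_i}\bb{a}_i$ is precisely an element $\bb{a}\in A_j^{b-k}$ in the sense of (\ref{defn_Abj}), since it is a sum of $b-k$ exponent vectors drawn with repetition from $A_j$. Grouping the multinomial sum according to the value of this combined exponent $\bb{a}$ produces the nested sum in the statement: the outer sum runs over $\bb{a}\in A_j^{b-k}$ and the inner sum runs over all multiplicity vectors with $\sum_i t_{\bb{a}_i}=b-k$ that realise that particular $\bb{a}$. Multiplying through by the fixed factor $\bb{Y}_{\Pi_j}^{\bb{g}}$ then merges the exponents into $\bb{a}'=\bb{a}+\bb{g}$.

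Finally I would take expectations term by term. Because $A_j$ is finite and $b$ is finite, every sum and product here is finite, so linearity of expectation applies without convergence concerns, and Assumption \ref{assume:moments} guarantees all the moments involved are finite. The hypothesis that $\bb{\theta}_j$, $v_j$ and $\bb{Y}_{\Pi_j}$ are mutually independent then factorises each term's expectation into $\mathbb{E}[v_j^k]\,\mathbb{E}\big[\prod_i\theta_{j\bb{a}_i}^{t_{\bb{a}_i}}\big]\,\mathbb{E}[\bb{Y}_{\Pi_j}^{\bb{a}'}]$, which I would rewrite as $m_k(v_j)$, the coefficient moment, and $\mu_{j,\bb{a}'}$ respectively, recovering (\ref{eq:mod_prop27}).

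I expect the main obstacle to be purely combinatorial bookkeeping rather than analysis: namely, verifying that reorganising the single multinomial sum into the nested sum over $\bb{a}\in A_j^{b-k}$ enumerates each admissible multiplicity vector exactly once. The map sending a multiplicity vector $(t_{\bb{a}_1},\dots,t_{\bb{a}_{|A_j|}})$ with $\sum_i t_{\bb{a}_i}=b-k$ to its combined exponent $\sum_i t_{\bb{a}_i}\bb{a}_i$ is many-to-one, so I would confirm that grouping by the value $\bb{a}$ of this exponent partitions the multinomial index set, with the outer sum ranging over the distinct values $\bb{a}\in A_j^{b-k}$ and the inner sum collecting all multiplicity vectors realising that fixed $\bb{a}$, so that no term is counted twice or omitted. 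Care is also needed to read the coefficient product $\prod_{\bb{a}_i+\cdots+\bb{a}_m=\bb{a}}\theta_{j\bb{a}_i}^{t_{\bb{a}_i}}$ as the same product $\prod_{\bb{a}_i\in A_j}\theta_{j\bb{a}_i}^{t_{\bb{a}_i}}$ that arises in the multinomial expansion, keeping the coefficients consistent with the multiplicities $t_{\bb{a}_i}$.
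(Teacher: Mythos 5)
Your proposal is correct and follows essentially the same route as the paper's proof: binomial expansion to isolate $v_j$, factoring out $m_k(v_j)$ by independence, the multinomial theorem applied to the regression sum, and a final factorisation of expectations via the mutual independence of $\bb{\theta}_j$, $v_j$ and $\bb{Y}_{\Pi_j}$. Your extra attention to the combinatorial bookkeeping---checking that grouping multiplicity vectors by their combined exponent $\bb{a}\in A_j^{b-k}$ partitions the multinomial index set---is a point the paper leaves implicit, but it is the same argument.
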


\begin{proof}
Rearranging Equation \eqref{monoYX} with $v_j$ as
the argument, then taking expectations, we have
\begin{align*}
    \mathbb{E}[Y_j^b\bb{Y}_{\Pi_j}^{\bb{g}}]&=\mathbb{E}\Bigg[\Big(\sum_{\bb{a}_j\in A_j}\theta_{j\bb{a}_j}\bb{Y}_{\Pi_j}^{\bb{a}_j}+v_j \Big)^b\bb{Y}_{\Pi_j}^{\bb{g}} \Bigg]=\sum_{k=0}^b\mathbb{E}\Bigg[\binom{b}{k}v_j^k\Big(\sum_{\bb{a}_j\in A_j}\theta_{j\bb{a}_j}\bb{Y}_{\Pi_j}^{\bb{a}_j} \Big)^{b-k} \bb{Y}_{\Pi_j}^{\bb{g}}\Bigg]
\end{align*}
Since $v_{j}\amalg\left(  \boldsymbol{\theta}_j,\bb{Y}_{\Pi_j}\right)$, 
we can write
$$
\mathbb{E}[Y_j^b\bb{Y}_{\Pi_j}^{\bb{g}}] = \sum_{k=0}^b\binom{b}{k}m_k(v_j)\mathbb{E}\Bigg[\Big(\sum_{\bb{a}_j\in A_j}\theta_{j\bb{a}_j}\bb{Y}_{\Pi_j}^{\bb{a}_j} \Big)^{b-k} \bb{Y}_{\Pi_j}^{\bb{g}} \Bigg].
$$
By the multinomial theorem \cite{Spiegel1990}
we have
$$
\Big(\sum_{\bb{a}_j\in A_j}\theta_{j\bb{a}_j}\bb{Y}_{\Pi_j}^{\bb{a}_j}\Big)^{b-k}=\sum_{\substack{\bb{a}\in A_j^{b-k}\\ t_{\bb{a}_1} + t_{\bb{a}_2} + \dots +t_{\bb{a}_{|A_j|}} = b-k\\ t_{\bb{a}_1}, t_{\bb{a}_2}, \dots, t_{\bb{a}_{|A_j|}}\geq 0}}\binom{b-k}{t_{\bb{a}_1}, t_{\bb{a}_2}, \dots t_{\bb{a}_{|A_j|}}}\Big(\prod_{\bb{a}_i\in A_j, \bb{a}_i+\bb{a}_j+\dots +\bb{a}_m = \bb{a}}\theta_{j\bb{a}_i}^{t_{\bb{a}_i}} \Big)\bb{Y}_{\Pi_j}^{\bb{a}}.
$$
Collecting the terms gives
\begin{align*}
\mathbb{E}[Y_j^b\bb{Y}_{\Pi_j}^{\bb{g}}] &= \sum_{k=0}^b\binom{b}{k}m_k(v_j)\sum_{\substack{\bb{a}\in A_j^{b-k}\\ t_{\bb{a}_1} + t_{\bb{a}_2} + \dots +t_{\bb{a}_{|A_j|}} = b-k\\ t_{\bb{a}_1}, t_{\bb{a}_2}, \dots, t_{\bb{a}_{|A_j|}}\geq 0}}\binom{b-k}{t_{\bb{a}_1}, t_{\bb{a}_2}, \dots t_{\bb{a}_{|A_j|}}}\mathbb{E}\Big(\prod_{\bb{a}_i\in A_j, \bb{a}_i+\bb{a}_j+\dots +\bb{a}_m = \bb{a}}\theta_{j\bb{a}_i}^{t_{\bb{a}_i}} \Big)\mathbb{E}[\bb{Y}_{\Pi_j}^{\bb{a}+\bb{g}}]
\end{align*}
\end{proof}

\begin{example} (Example \ref{ex:ex1} continued)
   To compute $\mathbb{E}\big(Y_4^b\bb{Y}_{\Pi_4}^{\bb{g}}\big)$, where $b=2$ and $\bb{g}=(1, 2, 3)$, by Proposition \ref{th:polynomialder},
    \begin{align*}
\mathbb{E}\big(Y_4^b\bb{Y}_{\Pi_4}^{\bb{g}}\big)&=\binom{2}{0}m_0(v_4)\sum_{\substack{\bb{a}\in A_4^2\\ t_{\bb{a}_1} + t_{\bb{a}_2} + \dots +t_{\bb{a}_{|A_4|}} = 2\\ t_{\bb{a}_1}, t_{\bb{a}_2}, \dots, t_{\bb{a}_{|A_4|}}\geq 0}}\binom{2}{t_{\bb{a}_1}, t_{\bb{a}_2}, \dots t_{\bb{a}_{|A_4|}}}\mathbb{E}\Big(\prod_{\bb{a}_i\in A_4, \bb{a}_i+\bb{a}_j+\dots +\bb{a}_m = \bb{a}}\theta_{4\bb{a}_i}^{t_{\bb{a}_i}} \Big)\mu_{4, \bb{a}+\bb{g}}\\
        &+\binom{2}{1}m_1(v_4)\sum_{\substack{\bb{a}\in A_4\\ t_{\bb{a}_1} + t_{\bb{a}_2} + \dots +t_{\bb{a}_{|A_4|}} = 1\\ t_{\bb{a}_1}, t_{\bb{a}_2}, \dots, t_{\bb{a}_{|A_4|}}\geq 0}}\binom{1}{t_{\bb{a}_1}, t_{\bb{a}_2}, \dots t_{\bb{a}_{|A_j|}}}\mathbb{E}\Big(\prod_{\bb{a}_i\in A_4, \bb{a}_i+\bb{a}_j+\dots +\bb{a}_m = \bb{a}}\theta_{4\bb{a}_i}^{t_{\bb{a}_i}} \Big)\mu_{4, \bb{a}+\bb{g}}\\
        &+m_2(v_4).
    \end{align*}
Consider the first term of the expression
    $$
    A_4^2=\big\{(0,0,0), (1,0,0), (1,2,0), (0,0,2), (2, 0, 0), (2, 2, 0), (1, 0, 2), (2, 4, 0), (1, 2, 2), (0, 0, 4)\big\},
    $$
    then we can deduce that that $\sum_{\bb{a}\in A_4^2}\binom{2}{t_{\bb{a}_1}, t_{\bb{a}_2}, \dots t_{\bb{a}_{|A_4|}}}\mathbb{E}\Big(\prod_{\bb{a}_i\in A_4, \bb{a}_i+\bb{a}_j+\dots +\bb{a}_m = \bb{a}}\theta_{4\bb{a}_i}^{t_{\bb{a}_i}} \Big)\mu_{4, \bb{a}+\bb{g}}$ can be written as
    \begin{align*}
    \mathbb{E}[\theta_{4,000}^2]\mu_{4,123} &+ 2\mathbb{E}[\theta_{4,000}\theta_{4,100}]\mu_{223}+2\mathbb{E}[\theta_{4,000}\theta_{4,120}]\mu_{4,243}+2\mathbb{E}[\theta_{4,000}\theta_{4,002}]\mu_{4,125}\\
    &+\mathbb{E}[\theta_{4,100}^2]\mu_{4,323}+2\mathbb{E}[\theta_{4,100}\theta_{4,120}]\mu_{4,343} +2\mathbb{E}[\theta_{4,100}\theta_{4,002}]\mu_{4,225}\\
    &+\mathbb{E}[\theta_{4,120}^2]\mu_{4,363}+2\mathbb{E}[\theta_{4,120}\theta_{4,002}]\mu_{4,345}+\mathbb{E}[\theta_{4,002}^2]\mu_{4,127}.
    \end{align*}
    Similarly, we can derive that
    $$
    A_4=\{(0,0,0),(1,0,0),(1,2,0)(0,0,2)\}
    $$
   and that $\sum_{\bb{a}\in A_4}\binom{1}{t_{\bb{a}_1}, t_{\bb{a}_2}, \dots t_{\bb{a}_{|A_j|}}}\mathbb{E}\Big(\prod_{\bb{a}_i\in A_4, \bb{a}_i+\bb{a}_j+\dots +\bb{a}_m = \bb{a}}\theta_{4\bb{a}_i}^{t_{\bb{a}_i}} \Big)\mu_{4, \bb{a}+\bb{g}}$
   can be written as
    $$
\mathbb{E}[\theta_{4,000}]\mu_{4,123}+\mathbb{E}[\theta_{4,100}]\mu_{4,223}+\mathbb{E}[\theta_{4,120}]\mu_{4,243}+\mathbb{E}[\theta_{4,002}]\mu_{4,125}.
    $$
    Finally, since $m_1(v_4)=0$ and $m_0(v_4)=1$, we have
    \begin{align*}
\mathbb{E}\big(Y_4^b\bb{Y}_{\Pi_4}^{\bb{g}}\big)&=\mathbb{E}[\theta_{4,000}^2]\mu_{4,123} + 2\mathbb{E}[\theta_{4,000}\theta_{4,100}]\mu_{4,223}+2\mathbb{E}[\theta_{4,000}\theta_{4,120}]\mu_{4,243}+2\mathbb{E}[\theta_{4,000}\theta_{4,002}]\mu_{4,125}\\
    &+\mathbb{E}[\theta_{4,100}^2]\mu_{4,323}+2\mathbb{E}[\theta_{4,100}\theta_{4,120}]\mu_{4,343} +2\mathbb{E}[\theta_{4,100}\theta_{4,002}]\mu_{4,225}\\
    &+\mathbb{E}[\theta_{4,120}^2]\mu_{4,363}+2\mathbb{E}[\theta_{4,120}\theta_{4,002}]\mu_{4,345}+\mathbb{E}[\theta_{4,002}^2]\mu_{4,127} + V_4.
    \end{align*} 
\end{example}
It can be checked when computing $\mathbb{E}[Y_j\bb{Y}_{\Pi_j}^{\bb{g}}]$, the panel $Q_j$ only need to consider $\mu_{j\bb{a}'}$ with $\bb{a}'=\bb{a}_j+\bb{g}$ and posterior expectations of its regression coefficients, $\mathbb{E}[\theta_{j\bb{a}_j}]$, for $\bb{a}_j\in A_j$ in the calculations. But in the case we are interested in computing $\mathbb{E}[Y_j^2\bb{Y}_{\Pi_j}^{\bb{g}}]$, we have to consider $\theta_{j\bb{a}_j}^2$, even if the distribution of regression parameters is, for example, Gaussian, the distribution of $\theta_{j\bb{a}_j}^2$ could be unfamiliar. 
However, within the proposed framework, we only need to obtain $\mathbb{E}[\theta_{j\bb{a}_j}^2]$, which is much simpler to calculate. We note that the proposed method is Bayesian since we include the estimated uncertainty of parameters in our calculations in Equation (\ref{eq:mod_prop27}) contrary to the frequentist approach where parameters are fixed at maximum likelihood estimated values. In many cases, to compute the attributes we will require various higher-order joint moments of the system, which capture a panel's uncertainty about its estimates.

 \section{Classes of polynomial regression models}\label{sec:graphical}

Using the polynomial regression model of Section \ref{sec:background}, we  define classes of these models with specific forms of a set of exponent vectors, building on classical ideas of graphical structures \cite{Lauritzen1988}.

\begin{definition}
\label{defn:extreme_points}
For a set of exponent vectors $A_j$ obtained from regression model $Y_j$, let $A_j^{\ast}$ denote the set of  \textbf{corner points}  where
\begin{align}
A_j^{\ast}=\left\{  \boldsymbol{a}^{\ast}\in A_j:\nexists\text{
}\boldsymbol{a}\neq\boldsymbol{a}^{\ast}\in A_j\text{ such that }\boldsymbol{a}%
^{\ast}\leq\boldsymbol{a}\right\},
\end{align}
where $\boldsymbol{a}\leq\boldsymbol{a}^{\ast}$ under product order
\footnote{The \emph{product order} for a set $A\neq \emptyset$: Given two pairs $\bb{a}=(a(1), a(2), \dots, a(n))$ and $\bb{a}'=(a'(1), a'(2), \dots, a'(n))$, where $\bb{a}, \bb{a}'\in A$, we declare that $\bb{a}\leq \bb{a}'$ if and only if $a(i)\leq a'(i), i=1, 2, \dots, n$. 
}, 
i.e., $0\leq a(i)\leq a^{\ast}(i)$, for all $i=1,2,\ldots,n.$ 
Further, define the closure of set $A_j^{\ast}$ as
\begin{align}
\bar{A}_j^{\ast}=\left\{  \boldsymbol{a}:\boldsymbol{a}%
\leq\boldsymbol{a}^{\ast},\boldsymbol{a}^{\ast}\in A_j^{\ast}
\right\}.
\end{align}
\end{definition}

 \color{black}
\begin{example}
A regression model for $Y_3$ defined as 
\begin{equation*}
       Y_3 = \theta_{3,00}+ \theta_{3,10}Y_1+\theta_{3,20}Y_1^2+ \theta_{3,11}Y_1Y_2+ \theta_{3,12}Y_1Y_2^2+ \theta_{3,01}Y_2+ \theta_{3,02}Y_2^2+ \theta_{3,03}Y_2^3 + v_3,
\end{equation*}
has the set of corner points $$A_3^* =\{ (2, 0), (1, 2), (0, 3)\},$$ with closure  $$\bar{A}_3^{\ast}=\{(0, 0), (1, 0), (2, 0), (1, 1), (1, 2), (0, 1), (0, 2), (0, 3)\}.$$ 
\end{example}

\begin{definition}\label{def:types_of_models}
For a regression model $Y_j$, with exponent set $A_j$ corresponding to the monomial covariates, then
 \begin{itemize}
    \item $Y_j$ is a \textbf{full model} if it contains all lower-order terms, i.e., $A_j=\bar{A}_j^*$;
    \item $Y_j$ is a \textbf{simple model} if it is a full model with a single
corner point, i.e., $A_j=\bar{A}_j^*$ and $\#(A_j^{\ast})=1$;
    \item $Y_j$ is a \textbf{hierarchical model} if it is a full model and none of its corner points has an exponent greater
than one; 
    \item $Y_j$ is a \textbf{graphical regression model}  if it is a hierarchical model and there is exactly one corner point for each of the cliques  
    of its undirected graph $\mathcal{U}$. 
The undirected graph contains an edge between covariates, iff there exists a monomial term in the regression model containing both.
\end{itemize}
\end{definition}

Denote by $\bb{Y}_{\mathcal{I}_j}$ where
$\mathcal{I}_j\subseteq\Pi_j$ is a set of indices corresponding to a clique in $\mathcal{U}$. Note that a simple model and a hierarchical model are special cases of a full model. By definition, $A_j\subseteq\bar{A}_j^{\ast}$ with equality when $Y_j$ is a full model and  where $\bar{A}_j^{\ast}$ is generated by a set of corner points $A_j^*$. In this sense, the corner points $A^*_j$ form a minimal generator of $A_j$ and thus a natural index of the associated full model $A_j$.

\begin{example}
\label{ex:diff_regressions}
(i) The polynomial model for $Y_2$ on a single variable,
\begin{align*}
Y_2 = \theta_{2, 0} + \theta_{2, 1}Y_1+ \theta_{2, 2}Y_1^{2}+ \dots + \theta_{2,b}Y_1^{b} + v_2
\end{align*}
is a simple model with $A_2^*=\{b \}$, but is not hierarchical when $b>1$. 

\noindent (ii) The model for $Y_5$ on four variables,
\begin{align*}
Y_5 &= \theta_{5, 0000}+ \theta_{5, 1000}Y_{1}+\theta_{5,0100}Y_{2}+\theta_{5, 0010}Y_{3}+\theta_{5, 0001}Y_{4}\\&+\theta_{5, 1100}Y_{1}Y_{2} + \theta_{5, 1010}Y_{1}Y_{3}+\theta_{5, 0110}Y_{2}Y_{3}+\theta_{5,0011}Y_{3}Y_{4} + v_5
\end{align*}
is hierarchical but not a simple or graphical regression model. It can be seen from $\mathcal{U}$:
\[%
\begin{array}
[c]{ccccc}%
Y_1 & - & Y_3 & - & Y_4\\
| & \diagup &  &  & \\
Y_2 &  &  &  &
\end{array}
\]
defined by the  monomial terms we have edges: 
$E=\left\{ \{Y_1, Y_2 \}, \{Y_1, Y_3 \}, \{Y_2, Y_3 \}, \{ Y_3Y_4\}\right\}$. However, the model is not a graphical since the monomial associated with the clique $\bb{Y}_{\{1,2,3 \}}$ is not part of the corner points: $$A_5^*=\{(1, 1, 0, 0), (1, 0, 1, 0), (0, 1, 1, 0), (0, 0, 1, 1) \}.$$

\noindent (iii) The  model for $Y_5$ on four variables:
\begin{align*}
Y_5 &= \theta_{5,0000}+  \theta_{5,1000}Y_{1}+ \theta_{5,0100}Y_{2}+ \theta_{5,0010}Y_{3}+ \theta_{5,0001}Y_{4}+ \theta_{5,1100}Y_{1}Y_{2} +  \theta_{5,1010}Y_{1}Y_{3}\\&+ \theta_{5,0110}Y_{2}Y_{3}+ \theta_{5,0011}Y_{3}Y_{4} +  \theta_{5,1110}Y_{1}Y_{2}Y_{3} + v_5,
\end{align*}
is a graphical regression model since the set of corner points $A_5^*=\big\{ (1, 1, 1, 0), (0, 0, 1, 1) \big\}$ correspond to the cliques $\bb{Y}_{\{ 1,2,3\}}$ and $\bb{Y}_{\{3,4\}}$ in its $\mathcal{U}$.

\end{example}
 The connection between the model classes in Definition \ref{def:types_of_models} and existing work is briefly discussed. Hierarchical log-linear models introduced by Edwards \cite{Edwards2012} to analyse multivariate discrete data are similar to our model classes. If a term in the model is set to zero, then all its higher-order related terms are also set to zero. 
Graphical models are the subclass of Edward's hierarchical log-linear models, where each clique of  $\mathcal{U}$ that represents the log-linear model has an interaction term containing all the indices in that clique. We note that the full model respects the principle of marginality introduced by Nedler \cite{Nedler1977}.

\subsection{Complexity of a single panel 
}\label{subsec:complexity_panel}
When considering a polynomial model associated with a single panel, we define the complexity as the number of monomial terms that need to be generated, and proceed to comment on the complexity of the model classes in Definition \ref{def:types_of_models}. 

In a full, simple, or hierarchical model over $n$ covariates of $Y_j$, the maximum number of terms in $\bar{A}_j^*$ that need to be considered is bounded by 
\begin{equation}
\max_{\boldsymbol{a}^{\ast}\in A_j^{\ast}}%
{\displaystyle\prod\limits_{i=1}^{n}}
\left(  a^{\ast}(i)+1\right)  \leq\#\left(  \bar{A}_j^*\right)  \leq\sum_{\boldsymbol{a}%
^{\ast}\in A_j^{\ast}}%
{\displaystyle\prod\limits_{i=1}^{n}}
\left(  a^{\ast}(i)+1\right).
\label{inequality_size_from_hull}
\end{equation}
\begin{example} Consider a full model for $Y_3$ with corner points 
$
A_3^*=\{(2, 1, 0), (0, 3, 0), (0, 0, 1) \}$, then by Equation (\ref{inequality_size_from_hull}) the number of terms will be
  $  6\leq \#(\bar{A}_3^*)\leq 12.
  $
\end{example}

In a graphical regression model, each corner point is the function of regressors within its cliques in $\mathcal{U}$. Thus the cliques of interest will be the parent set of particular regressors in  $\mathcal{G}$ and uniquely determined by $\mathcal{G}$.

The graphical regression model for $Y_j$ with cliques $\bb{Y}_{\mathcal{I}_j}$ where
$\mathcal{I}_j\subseteq\Pi_j$,
contains all simple joint interaction terms up to order $\#(\mathcal{I}_j)$. Then the associated set of exponents
can be written as
\begin{align}
A_j=\left\{  \boldsymbol{a}=\left(  a(1),a(2),\ldots,a(j-1)\right), \text{where }0\leq
a(i)\leq \mathbbm{1}\{i\in\mathcal{I}_j \}\text{ for }i=1,2,\ldots, j-1\right\},
\end{align}
with $\mathbbm{1}\{i\in\mathcal{I}_j\} = 1$ if $i\in \mathcal{I}_j$,  and $\mathbbm{1}\{i\in\mathcal{I}_j\} = 0$ if $i\notin \mathcal{I}_j$.

Similarly, the set of corner points for a graphical regression model for $Y_j$ is
\begin{equation}
    A_j^*=\left\{ \bb{a}^*=\left(a^*(1),a^*(2),\ldots,a^*(j-1)\right), \text{where }a^{\ast}(i)=\mathbbm{1}\{i\in\mathcal{I}_j \}\text{ for }i=1,2,\ldots, j-1\right\},
\end{equation}
and for this case
\begin{equation}
    \#\left(  \bar{A}_j^*\right)\leq 1+\sum_{\mathcal{I}_j\subseteq V} \sum_{k=1}^{\#(\mathcal{I}_j)} \binom{\#(\mathcal{I}_j)}{k},
\end{equation}
where $V$ is the set of vertices of $\mathcal{U}$.

\begin{example}
    Consider the graphical regression model in Example \ref{ex:diff_regressions} (iii). The number of terms related to the clique $\bb{Y}_{\{ 1,2,3\}}$ is  $\sum_{k=1}^{3} \binom{3}{k} = 7$,  and for  $\bb{Y}_{\{3,4\}}$ is $          \sum_{k=1}^{2} \binom{2}{k}  = 3. $  Then total number of exponents in the closure set is $\#(\bar{A}_5^*)\leq 11$.
\end{example}

\subsection{Operations with exponent sets}

As the result of operations on regression model $Y_j$ considered in Section \ref{subsec:singlepanel}, we proceed to derive the set of corner points and the closure of the corresponding obtained set.

We introduce the sets of exponent vectors $A$ and $G$ with the corresponding sets of corner points $A^{\ast}$ and $G^{\ast}$, respectively.
For $b\in\mathbb{Z}^+$ define 
\begin{align}
H=G+bA=\left\{  \boldsymbol{h}:\boldsymbol{h=g}+b\boldsymbol{a}\text{
for some }\boldsymbol{g}\in G,\text{ }\boldsymbol{a}\in A\right\}.
\end{align}
\begin{lemma}
The corner points $H_{I}^{\ast}$of $H_{I}=
{\cup_{i\in I}}
b_{i}A$ are the corner points of $b^{\ast}A$ where
$b^{\ast}=\max_{i\in I}b_{i}$.

\end{lemma}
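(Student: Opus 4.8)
The plan is to recast the statement in purely order-theoretic terms. By Definition \ref{defn:extreme_points}, the corner points of any set of exponent vectors are exactly its maximal elements under the product order $\leq$. So the lemma asserts that the maximal elements of $H_I=\bigcup_{i\in I}b_iA$ coincide with the maximal elements of $b^*A$, where $b^*=\max_{i\in I}b_i$. I would prove this through two elementary observations about how $b^*A$ sits inside $H_I$, followed by a general fact about maximal elements.

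First I would record that $b^*A\subseteq H_I$: since each $b_i\in\mathbb{Z}^+$ and $I$ is finite, the maximum $b^*$ is attained at some index $i_0\in I$, whence $b^*A=b_{i_0}A\subseteq\bigcup_{i\in I}b_iA=H_I$. Second, I would show that every element of $H_I$ is dominated by an element of $b^*A$: any $\bb{h}\in H_I$ has the form $\bb{h}=b_i\bb{a}$ for some $i\in I$ and $\bb{a}\in A$, and because exponent vectors are componentwise nonnegative and $b_i\leq b^*$, we get $b_i\bb{a}\leq b^*\bb{a}$ with $b^*\bb{a}\in b^*A$. The nonnegativity of $\bb{a}$ is precisely what makes scalar multiplication order-preserving, so this is the single point where the structure of exponent sets enters.

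The remaining ingredient is a general poset fact: if $S\subseteq T$ and every element of $T$ lies below some element of $S$, then $S$ and $T$ have the same maximal elements. I would verify both directions. If $\bb{t}$ is maximal in $T$, pick $\bb{s}\in S$ with $\bb{t}\leq\bb{s}$; as $\bb{s}\in S\subseteq T$ and $\bb{t}$ is maximal, $\bb{t}=\bb{s}\in S$, and since no element of $T\supseteq S$ strictly dominates $\bb{t}$, it is also maximal in $S$. Conversely, if $\bb{s}$ is maximal in $S$ and $\bb{s}\leq\bb{t}$ for some $\bb{t}\in T$, then $\bb{t}\leq\bb{s}'$ for some $\bb{s}'\in S$, so $\bb{s}\leq\bb{s}'$ forces $\bb{s}=\bb{s}'$ by maximality and the chain $\bb{s}\leq\bb{t}\leq\bb{s}$ collapses to $\bb{t}=\bb{s}$; hence $\bb{s}$ is maximal in $T$. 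Applying this with $S=b^*A$ and $T=H_I$ yields $H_I^{\ast}=(b^*A)^{\ast}$, which is the claim.

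I do not expect a serious obstacle: the content is entirely in the reduction carried out in the first two steps, and the order fact is routine. The one thing to watch in the write-up is that the argument genuinely relies on $b_i\in\mathbb{Z}^+$, so that $b^*$ is realised inside the union and scaling is monotone on the nonnegative orthant; I would also state explicitly that the product order is in force throughout and that the sets are finite, so maximal elements exist and dominate every element.
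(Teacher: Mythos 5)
Your proof is correct, and it is worth noting that the paper itself states this lemma \emph{without any proof}: the proof environment that follows it in the source belongs to the next result, Lemma \ref{lemma:corner_points1}, so your proposal supplies an argument the authors omitted entirely. Your route is sound: (i) $b^{\ast}A\subseteq H_I$ because the maximum $b^{\ast}$ is attained at some $i_0\in I$; (ii) every element $b_i\bb{a}\in H_I$ satisfies $b_i\bb{a}\leq b^{\ast}\bb{a}\in b^{\ast}A$, which uses exactly the componentwise nonnegativity of exponent vectors; and (iii) the general order-theoretic fact that if $S\subseteq T$ and every element of $T$ is dominated by an element of $S$, then $S$ and $T$ have the same maximal elements — and you verify both directions of (iii) correctly, invoking antisymmetry of the product order where needed, which by Definition \ref{defn:extreme_points} is precisely the statement about corner points. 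For comparison, the style the authors use for the companion Lemma \ref{lemma:corner_points1} is a one-line contradiction (``otherwise the element would be strictly dominated''); transplanted here, that argument would only show that corner points of $H_I$ must lie in $b^{\ast}A$, i.e. the inclusion $H_I^{\ast}\subseteq b^{\ast}A$, and one would still need your domination step (ii) to get the full equality $H_I^{\ast}=(b^{\ast}A)^{\ast}$ in both directions. Your closing caveats are also the right ones: the statement presupposes that $\max_{i\in I}b_i$ is attained (automatic for finite $I$), and the monotonicity of scaling on the nonnegative orthant is the single place where the structure of exponent sets is genuinely used.
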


\begin{lemma}
\label{lemma:corner_points1}
The corner points $H^{\ast}$ of $H=G+bA$ are of the form
$\boldsymbol{h}^{\ast}=\boldsymbol{g}^{\ast}+b\boldsymbol{a}^{\ast}$ where
$\boldsymbol{g}^{\ast}\in G^{\ast}$ and $\boldsymbol{a}^{\ast}\in A^{\ast}.$
\end{lemma}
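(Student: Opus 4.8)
The plan is to prove the stated (forward) inclusion: every corner point of $H$ can be written as $\boldsymbol{g}^{\ast}+b\boldsymbol{a}^{\ast}$ with $\boldsymbol{g}^{\ast}\in G^{\ast}$ and $\boldsymbol{a}^{\ast}\in A^{\ast}$. The whole argument rests on two elementary facts about the product order of Definition \ref{defn:extreme_points}: first, that the map $(\boldsymbol{g},\boldsymbol{a})\mapsto\boldsymbol{g}+b\boldsymbol{a}$ is monotone in each argument for $b\in\mathbb{Z}^{+}$; and second, that in a finite set every element lies below some maximal element, i.e. below some corner point.

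First I would fix an arbitrary corner point $\boldsymbol{h}^{\ast}\in H^{\ast}$. By the definition $H=G+bA$ there exist $\boldsymbol{g}\in G$ and $\boldsymbol{a}\in A$ with $\boldsymbol{h}^{\ast}=\boldsymbol{g}+b\boldsymbol{a}$. Since $A$ and $G$ are finite (the regression model $Y_j$ has finitely many monomial terms), I can select corner points $\boldsymbol{g}^{\ast}\in G^{\ast}$ and $\boldsymbol{a}^{\ast}\in A^{\ast}$ dominating these, that is $\boldsymbol{g}\leq\boldsymbol{g}^{\ast}$ and $\boldsymbol{a}\leq\boldsymbol{a}^{\ast}$ under the product order.

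Next I would invoke coordinatewise monotonicity. Because $b\geq 0$, for every index $i$ we have $g(i)+b\,a(i)\leq g^{\ast}(i)+b\,a^{\ast}(i)$, and hence $\boldsymbol{h}^{\ast}=\boldsymbol{g}+b\boldsymbol{a}\leq\boldsymbol{g}^{\ast}+b\boldsymbol{a}^{\ast}$. Moreover $\boldsymbol{g}^{\ast}\in G^{\ast}\subseteq G$ and $\boldsymbol{a}^{\ast}\in A^{\ast}\subseteq A$, so the dominating vector $\boldsymbol{g}^{\ast}+b\boldsymbol{a}^{\ast}$ is itself a member of $H$. Since $\boldsymbol{h}^{\ast}$ is maximal in $H$ yet is dominated by another element of $H$, maximality forces equality $\boldsymbol{h}^{\ast}=\boldsymbol{g}^{\ast}+b\boldsymbol{a}^{\ast}$, which is exactly the claimed form.

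I do not anticipate a serious obstacle: the content is that taking Minkowski-type combinations of two finite posets carries maximal elements to combinations of maximal elements, and this follows from monotonicity together with finiteness. The one subtlety to respect is that the lemma asserts only that corner points of $H$ \emph{have} this form, not the converse; a combination $\boldsymbol{g}^{\ast}+b\boldsymbol{a}^{\ast}$ of corner points may be dominated by some other such combination and so need not itself be a corner point of $H$. I would therefore avoid claiming a bijection and state the result purely as the inclusion $H^{\ast}\subseteq\{\boldsymbol{g}^{\ast}+b\boldsymbol{a}^{\ast}:\boldsymbol{g}^{\ast}\in G^{\ast},\ \boldsymbol{a}^{\ast}\in A^{\ast}\}$.
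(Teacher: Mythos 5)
Your proof is correct and rests on the same two ingredients as the paper's own argument: coordinatewise monotonicity of $(\boldsymbol{g},\boldsymbol{a})\mapsto\boldsymbol{g}+b\boldsymbol{a}$ and maximality of corner points under the product order; you argue directly (dominate $\boldsymbol{g},\boldsymbol{a}$ by corner points, then invoke maximality of $\boldsymbol{h}^{\ast}$), while the paper phrases the identical idea as a contradiction. Your explicit appeal to finiteness of $G$ and $A$, and your remark that the lemma gives only an inclusion rather than a bijection, are sound refinements of the same approach.
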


\begin{proof}
By contradiction, if $\boldsymbol{h}^{\ast}$ were not of this form then
either $\boldsymbol{g}^{\ast}$or $\boldsymbol{a}^{\ast}$ would be strictly
dominated by another element in $G$ or $A$. In either case this would mean
by definition of $G+bA$ that $\boldsymbol{h}^{\ast}$ was strictly
dominated in $G+bA$.
\end{proof}
These results imply that by constructing sets of type $H_{I}=\cup_{i\in I}\{  G+b_{i}A\} $ we  only need to calculate the corner points of $G$ and $b^{\ast}A$. Following the results in Section \ref{subsec:singlepanel}, for a polynomial model $Y_j$, we are interested in deriving the new exponent sets $A_j^b$ and $(A_j^b)^*$, for some $b$. We demonstrate that the set of corner points of exponent set of $Y_j^b$ can be derived from the set of corner points of $Y_j$.
 
\begin{lemma}\label{lemma_simple} 
For a regression model $Y_j$, with a set of exponents, $A_j$, and a set of corner points, $A_j^*$, for  $b\in\mathbb{Z}^+$, we derive that 
\begin{equation}\label{eq:powered_set_lemma}
    (A_j^b)^*=\Big\{\bb{a}:\bb{a}=\bb{a}_i + \bb{a}_j+\dots+\bb{a}_k,\text{ where } \bb{a}_i, \bb{a}_j, \dots, \bb{a}_k\in A_j^*\text{ and }S_n=b\Big\}, 
\end{equation}
then $(A_j^b)^*=(A_j^*)^b$. Similarly, 
\begin{equation}\label{eq:powered_set_lemma2}
    \overline{(A_j^b)^*} = \Big\{\bb{a}:\bb{a}=\bb{a}_i+\bb{a}_j+\cdots + \bb{a}_k, \text{ where }\bb{a}_i, \bb{a}_j, \cdots \bb{a}_k\in \bar{A}_j^*\text{ and }S_n=b  \Big\},
\end{equation}
then $\overline{(A_j^b)^*}=(\bar{A}_j^*)^b$.
\end{lemma}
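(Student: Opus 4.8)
The plan is to prove the two identities by induction on $b$, treating $A_j^{b}$ as the $b$-fold sum $A_j^{b}=A_j^{b-1}+A_j$ (multiplication of monomials corresponds to adding exponents) and repeatedly invoking Lemma~\ref{lemma:corner_points1}, specialised to the scalar $1$ so that it describes the corner points of a sum of two exponent sets. The base case $b=1$ is immediate since $A_j^{1}=A_j$ and $(A_j^{1})^{\ast}=A_j^{\ast}$. Throughout I would rely on two elementary facts about the product order on $\mathbb{Z}_{\ge 0}^{n}$: that coordinatewise addition is monotone, so $\bb{a}\le\bb{a}'$ and $\bb{b}\le\bb{b}'$ give $\bb{a}+\bb{b}\le\bb{a}'+\bb{b}'$; and that every element of a finite set is dominated by one of its corner points, so $A_j\subseteq\bar{A}_j^{\ast}$.

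For Equation~\eqref{eq:powered_set_lemma} I would first establish the inclusion $(A_j^b)^{\ast}\subseteq(A_j^{\ast})^{b}$. Writing $A_j^{b}=A_j^{b-1}+A_j$ and applying Lemma~\ref{lemma:corner_points1} with $G=A_j^{b-1}$ and $A=A_j$ shows every corner point of $A_j^{b}$ has the form $\bb{g}^{\ast}+\bb{a}^{\ast}$ with $\bb{g}^{\ast}\in(A_j^{b-1})^{\ast}$ and $\bb{a}^{\ast}\in A_j^{\ast}$; the inductive hypothesis $(A_j^{b-1})^{\ast}\subseteq(A_j^{\ast})^{b-1}$ then places this sum in $(A_j^{\ast})^{b}$. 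For the reverse inclusion I would take a sum $\bb{a}^{\ast}_{(1)}+\dots+\bb{a}^{\ast}_{(b)}$ of corner points and attempt to show it cannot be strictly dominated inside $A_j^{b}$, using strict monotonicity of addition to transfer any such domination onto one of the summands and contradict its maximality in $A_j$.

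For Equation~\eqref{eq:powered_set_lemma2} I would argue directly by double inclusion, noting that $\overline{(A_j^b)^{\ast}}$ is simply the downward closure of $A_j^{b}$. For ``$\subseteq$'', given $\bb{c}\le\bb{h}$ with $\bb{h}=\bb{a}_{(1)}+\dots+\bb{a}_{(b)}\in A_j^{b}$, I would distribute the deficit $\bb{h}-\bb{c}\ge 0$ coordinatewise across the summands to obtain $\bb{c}=\bb{c}_{(1)}+\dots+\bb{c}_{(b)}$ with $0\le\bb{c}_{(i)}\le\bb{a}_{(i)}$; since $\bar{A}_j^{\ast}$ is downward closed, each $\bb{c}_{(i)}\in\bar{A}_j^{\ast}$ and hence $\bb{c}\in(\bar{A}_j^{\ast})^{b}$. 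For ``$\supseteq$'', any $\bb{c}=\bb{c}_{(1)}+\dots+\bb{c}_{(b)}$ with $\bb{c}_{(i)}\in\bar{A}_j^{\ast}$ satisfies $\bb{c}_{(i)}\le\bb{a}^{\ast}_{(i)}$ for some corner point, so $\bb{c}\le\sum_i\bb{a}^{\ast}_{(i)}\in A_j^{b}$ and $\bb{c}$ lies in the downward closure.

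I expect the reverse inclusion in Equation~\eqref{eq:powered_set_lemma} to be the main obstacle. The delicate point is that a sum of antichains need not itself be an antichain, so it is not automatic that every $b$-fold sum of corner points is a corner point of $A_j^{b}$: one such sum may be dominated by another. The transfer-of-domination argument therefore needs the specific structure at hand rather than monotonicity alone, and I would check carefully whether the equality $(A_j^b)^{\ast}=(A_j^{\ast})^{b}$ holds verbatim or whether the right-hand side must be reduced to its own maximal elements, i.e. whether the sharp statement is $(A_j^b)^{\ast}=\big((A_j^{\ast})^{b}\big)^{\ast}$.
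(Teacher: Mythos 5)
Your closing suspicion is correct, and it is the crux of the matter: the equality \eqref{eq:powered_set_lemma} is false as stated, and only the inclusion you actually prove, $(A_j^b)^{\ast}\subseteq(A_j^{\ast})^{b}$, holds in general. A concrete counterexample: take $A_j=A_j^{\ast}=\{(3,0),(2,2),(0,3)\}$ (e.g.\ the model $Y_3=\theta_{3,30}Y_1^3+\theta_{3,22}Y_1^2Y_2^2+\theta_{3,03}Y_2^3+v_3$) and $b=2$. Then $(3,0)+(0,3)=(3,3)$ belongs to $(A_j^{\ast})^{2}$, yet it is strictly dominated by $(2,2)+(2,2)=(4,4)\in A_j^{2}$, so $(3,3)\notin(A_j^{2})^{\ast}$. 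This is exactly the antichain failure you anticipated: a sum of pairwise-incomparable maximal elements need not be maximal among all $b$-fold sums, and no transfer-of-domination argument can repair this, because the dominating element is assembled from a \emph{different} multiset of corner points rather than from termwise-dominating ones. The sharp statement is the one you conjectured, $(A_j^b)^{\ast}=\big((A_j^{\ast})^{b}\big)^{\ast}$: it follows from your forward inclusion together with the elementary fact that whenever $(A_j^b)^{\ast}\subseteq S\subseteq A_j^b$, the maximal elements of $S$ coincide with those of $A_j^b$. Your double-inclusion proof of \eqref{eq:powered_set_lemma2} (deficit-splitting for one direction, monotonicity of addition plus domination by corner points for the other) is complete and correct; that identity is true.

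For comparison, the paper's own proof is a two-line contradiction argument that establishes precisely your forward inclusion --- if some summand were non-maximal in $A_j$, replacing it by a dominating element would strictly dominate the sum in $A_j^b$ --- and then asserts that the same holds for the closure; it never addresses the reverse inclusion of \eqref{eq:powered_set_lemma}, which is the false part. So you have identified a genuine overstatement in the lemma itself, not a defect of your own attempt. The damage downstream is limited: the worked example following the lemma has corner points $\{(2,0),(0,1)\}$, for which the equality happens to hold; the moment-counting remarks and Corollary \ref{thm:nonfounder} really rely on the closure identity \eqref{eq:powered_set_lemma2} and the inclusion $(A_j^b)^{\ast}\subseteq(A_j^{\ast})^{b}$ (carrying a superset of the true corner points only risks requesting redundant moments); and the later corollaries involve models with a single corner point, where the equality is trivial. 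The clean fix is to restate \eqref{eq:powered_set_lemma} as $(A_j^b)^{\ast}=\big((A_j^{\ast})^{b}\big)^{\ast}$ and keep your argument for \eqref{eq:powered_set_lemma2} as is.
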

 
\begin{proof}

By contradiction, if $\bb{a}$ were not of this form then terms $\bb{a}_i, \bb{a}_j\cdots, \bb{a}_k$ would be strictly dominated by other terms in $A_j$. Therefore by definition of $A_j^b$ that would mean that $\bb{a}$ is strictly dominated by terms in $A_j^b$. Same holds for the closure set of the corner points $(A_j^b)^*$.
\end{proof}
\color{black}

Observe that by Equation (\ref{eq:powered_set_lemma2}) if $b$ is large, extensive amounts of information are needed, as $\#\overline{(A_j^{b})^*} $ will approximately increase at an order of at least $b^{n}$. We note that in most practical scenarios we find that we can keep the parameter $b$ small. Lemma \ref{lemma_simple} can be used to bound the number of expectations required when powering up moments  by enumerating the number of points `under' the powered-up corner points.
 
\begin{example}
Consider the regression model
$$
Y_3=\theta_{3,00}+\theta_{3,20}Y_1^2 +\theta_{3,01}Y_2 + v_3,$$
where $A_3=\{(0, 0), (2, 0), (0, 1) \}$, $A_3^*=\{(2, 0), (0, 1) \}$ and $\bar{A}_3^*=\{(0, 0), (1, 0), (2, 0), (0, 1) \}$. To calculate $Y_3^3$, by Lemma \ref{lemma_simple} we obtain the set of corner points $$(A_3^3)^*=\{(6,0), (4, 1), (2, 2), (0, 3) \},$$
and derive
$$
\overline{(A_3^3)}^*=\left\{(0, 0), (1, 0), (2, 0), (3, 0), (4, 0), (5, 0), (6, 0), (1, 1), (2, 1),  (3, 1), (4, 1), (2, 2), (0, 1), (0, 2), (0, 3) \right\}.
$$
\end{example}

\section{Message propagation}\label{sec:momentsuff}
In Section \ref{subsec:messagepass}
 we present an algorithm for generating and passing moments across $\mathcal{G}$, a key result of the paper. In Section \ref{subsec: explicitformula}  we consider the complexity, in terms of the number of donations, for the network of panels and discuss implications of various classes of models introduced in Section \ref{sec:graphical}.

\subsection{A general message propagation algorithm}
\label{subsec:messagepass}

\begin{lemma}\label{lemma:5.1} 
 
For any $Y_{i}$ and $Y_j$ in  $\mathcal{G}$, if $a_j(k)\geq 1, k=1, \dots, m$ in $\bb{a}_j=(a_j(1), a_j(2), \dots, a_j(m))$ and  $Y_{i}\notin \bb{Y}_{\Pi_j}$, then the parameters $\theta_{j\bb{a}_j}=0$ and their expectations  $\mathbb{E}[\theta_{j\bb{a}_j}]=0$.
\end{lemma}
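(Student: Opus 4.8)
The plan is to read the statement off directly from the structural definition of the polynomial regression model together with the conditional independence encoded in $\mathcal{G}$. The essential point is that $Y_j$ is, by construction, a polynomial in its parents $\bb{Y}_{\Pi_j}$ alone; consequently a monomial that assigns a positive power to a non-parent variable simply does not occur in the model, and the coefficient attached to such an absent monomial must be identically zero.

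First I would recall the monomial form in Equation \eqref{eq:polynomial_reg}, in which each term is $\bb{Y}_{\Pi_j}^{\bb{a}_j} = \prod_{i \in \Pi_j} Y_i^{a_{ji}}$, with the product ranging only over the parent indices $i \in \Pi_j$. This is precisely the regression-model expression of the factorisation in Equation \eqref{eq:py_dynamic}, namely that $[Y_j \mid \bb{Y}_{\Pi_j}]$ depends on no variable outside $\bb{Y}_{\Pi_j}$. Reading the exponent vector $\bb{a}_j = (a_j(1), \dots, a_j(m))$ on the full index set, it follows that every $\bb{a}_j \in A_j$ is supported on $\Pi_j$: that is, $a_j(k) = 0$ for every $k$ with $Y_k \notin \bb{Y}_{\Pi_j}$.

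Second, I would take an exponent vector that does have $a_j(k) \geq 1$ for some index $k$ with $Y_i = Y_k \notin \bb{Y}_{\Pi_j}$. By the previous step this vector is not in $A_j$, so the associated monomial is absent from the regression for $Y_j$; under the convention that a missing term carries a zero coefficient, $\theta_{j\bb{a}_j} = 0$ holds deterministically, i.e.\ with probability one. The expectation claim is then immediate, since the mean of the degenerate random variable equal to the constant $0$ is $0$, giving $\mathbb{E}[\theta_{j\bb{a}_j}] = 0$.

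Since the result is purely structural, there is no computational obstacle. The only points meriting care are to keep the two assertions distinct — that the parameter itself vanishes, and not merely its mean — and to make explicit that the vanishing is nothing more than a restatement of the directed Markov property of $\mathcal{G}$. In the subsequent development this lemma functions as a bookkeeping device, ensuring that contributions from non-parent variables drop out of the moment recursions of Section \ref{subsec:messagepass} automatically.
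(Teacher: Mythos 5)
Your proof is correct and takes essentially the same approach as the paper: the paper's two-sentence proof likewise reads the result off the directed Markov property, observing that for $\mathcal{G}$ to be valid $Y_j$ can depend on its non-descendants only through its parents, so all conditional expectations are functions of the parent values alone and coefficients on monomials involving non-parents must vanish. Your elaboration via the exponent set $A_j$ (absent monomials carry coefficient zero, deterministically, hence zero expectation) simply makes explicit the bookkeeping the paper leaves implicit.
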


\begin{proof}
For $\mathcal{G}$ to be valid, by definition $Y_j$ can only depend on its
non-dependents through its parents. Thus all
conditional expectations are functions of the values of the parent
variables alone.
\end{proof}

Note that the form of the exponent vector has changed as we consider the whole system depicted by $\mathcal{G}$, where previously individual panels were considered. This result states that the graph restricts the nature of the polynomial relationships that might exist, as you only need to consider monomials associated with parents. To evaluate the feasibility of the system model we need the size of the panel donations and the generations of the models of non-founder panels,  that is, panels that have a non-empty parent set.

\begin{assumption}\label{ass_indep}
      A set of variables represented by founder nodes in graph $\mathcal{G}$ are mutually independent.
\end{assumption} 
 
For the decision-maker $Q_{m+1}$, define the exponent set of utility function $A_{m+1}$. 
The algorithm for generating moments for each monomial in its utility has two phases: \emph{request and donate}.   
The \textbf{request phase} works backwards from the decision maker $Q_{m+1}$, iteratively generating polynomials  whose expectation will be requested by $Q_{j}$ from $Q_{i}$ for $i < j$. That is,  for $k=m+1, \dots, 3, 2$ determine $\Lambda_{k}^+$. Using these sets, the \textbf{donate phase} assembles the required expectations that needs to be produced by each panel to be passed forward, which is simply for $k=1, \dots, m$ to determine $\Lambda_k^-$. Denote by $\Lambda_{i\rightarrow j}$ the set of expectations of monomials provided by panel $Q_i$ to $Q_j$. Let $A_{i\rightarrow j}$ denote the set of exponent vectors corresponding to $\Lambda_{i\rightarrow j}$ that can be written as
\begin{equation}
    A_{i\rightarrow j}=\{\bb{a}: \bb{Y}^{\bb{a}}\in \Lambda_{i\rightarrow j}\},
\end{equation}
with $\bb{Y}=(Y_1, Y_2, \dots, Y_m)$ and $\bb{a}=(a(1), a(2), \dots, a(m))$.

\begin{definition}
\label{def:donate_request}
The set of all expectations of monomials \textbf{requested} by $Q_j$ from $\{Q_1, \dots, Q_{j-1} \}$ and the set of their associated exponent vectors are
\begin{equation}
    \Lambda_j^+=\bigcup_{i=1}^{j-1}\Lambda_{i\rightarrow j}, \quad A_j^+=\bigcup_{i=1}^{j-1}A_{i\rightarrow j}
\end{equation}
whereas the set of all expectations of monomials \textbf{donated} by $Q_j$ to $\{Q_{j+1}, \dots, Q_m, Q_{m+1} \}$ and the set of their associated exponent vectors are
\begin{equation}
    \Lambda_j^-=\bigcup_{i=j+1}^{m+1}\Lambda_{j\rightarrow i}, \quad  \quad A_j^-=\bigcup_{i=j+1}^{m+1}A_{j\rightarrow i}.
\end{equation}
\end{definition}

Note that in the request phase, the request is sent to the panel that corresponds to the \emph{leading term} in the interaction, the term that has the highest power on the product, as it is lexicographically ordered.

We note this is similar to a junction tree algorithm and propagation of clique tables associated with a Bayesian network \cite{Smith2010}. There are well-known message-passing algorithms that facilitate the computation of marginal posterior distributions \cite{Cowell2007, Dawid1992, Lauritzen1996, Spiegelhalter1993}. Here we focus on the development of an evidence propagation algorithm for Bayesian networks of variables modelled with polynomial regression by solely considering moments.

\begin{example}
\label{ex:special}
Consider panels $\{Q_1, Q_2, Q_3, Q_4\}$ with $\mathcal{G}$:
\[
\begin{array}
[c]{ccccc}%
 &  & Y_3 & \rightarrow & Y_4\\
 & \nearrow  &  & \nearrow & \\
Y_1 & \rightarrow & Y_2 &  &
\end{array}
\]
where
\begin{align*}
    Y_4&=\theta_{4, 0110}Y_2Y_3+v_4, \\
    Y_3&=\theta_{3,1000}Y_1+v_3, \\
    Y_2&=\theta_{2,1000}Y_1+v_2.
\end{align*}
Assume that the utility function only depends on $Y_4$. The backward algorithm requires the highest indexed parent of $Y_{4}$ to calculate the joint moments that include them, here $\mathbb{E}\left( Y_{2}Y_{3}\right)  $.
Then $Q_4$ requests from $Q_3$
$$
\mu_{0110}=\mathbb{E}\left(  Y_{2}Y_{3}\right)  =\mathbb{E}\left(
Y_{2}\left[  \theta_{3,1000}Y_{1}+v_{3}\right]  \right)  =m_{3,1000}\mathbb{E}%
\left(  Y_{1}Y_{2}\right)  =m_{3,1000}\mu_{1100},
$$
and $Q_3$ requests from $Q_2$
$$
\mu_{1100}=\mathbb{E}\left(  Y_{1}Y_{2}\right)  =\mathbb{E}\left(
Y_{1}\left[  \theta_{2,1000}Y_{1}+v_{2}\right]  \right)  =m_{2,1000}\mathbb{E}\left(
Y_{1}^{2}\right)  =m_{2,1000}\mathbb{\mu}_{2000},
$$
and $Q_2$ requests from $Q_1$ 
$$\mathbb{\mu
}_{2000}=\mathbb{E}\left(  Y_{1}^{2}\right)  = \mathbb{E}\left(
Y_{1}\right)  ^{2}+\mathbb{V}(Y_{1}).$$
The decision-maker can then calculate 
$$
\mu_{0001}=\mathbb{E}(Y_4)=\mathbb{E}(\theta_{4, 0110}Y_2Y_3+v_4)=m_{4,0110}m_{3,1000}m_{2,1000}\left(  \mathbb{E}\left(  Y_{1}\right)
^{2}+\mathbb{V}(Y_{1})\right).
$$
Then 
$$
\Lambda_{1}^{-}=\left\{  \mathbb{\mu}_{2000}\right\}  ,\Lambda
_{2}^{-}=\left\{  \mu_{1100}\right\}  ,\Lambda_{3}^{-}=
\left\{  \mu_{0110}\right\}  ,\Lambda_{4}^{-}=\left\{  \mu
_{0001}\right\},
$$
and all requests are 
$$
\Lambda_{2}^{+}=\left\{  \mathbb{\mu}_{2000}\right\}  ,\Lambda
 _{3}^{+}=\left\{  \mu_{1100}\right\}  ,\Lambda_{4}^{+}=
\left\{  \mu_{0110}\right\}  ,\Lambda_{5}^{+}=\left\{  \mu
_{0001}\right\}.
$$
These provide a full list of moments the decision maker will require to evaluate their expected utility.

\end{example}


 In Proposition \ref{th:polynomialder} the expection of monomial was considered for a single panel. The notation is modified to capture the expection of a monomial in the network. Define for a single donation from panel $Q_j$,  where $\boldsymbol{g}_j$ is a vector specifying the exponents of the monomial of interest,
\begin{align}
\label{eq:moment_def}
\mu_{\boldsymbol{g}_{j}}&=\mathbb{E}\left\{  Y_{j}^{g_{j}%
(j)}\left(
{\displaystyle\prod\limits_{i=1}^{j-1}}
 Y_{i}^{g_{j}(i)}\right) \right\},\\
 &=\mathbb{E}\left\{  \Big (\sum_{\bb{a}_j\in A_j}\theta_{j\bb{a}_j}\bb{Y}_{\Pi_j}^{\bb{a}_j}+ v_j \Big)^{g_{j}%
(j)}\left(
{\displaystyle\prod\limits_{i=1}^{j-1}}
 Y_{i}^{g_{j}(i)}\right) \right\}.
\end{align}

\begin{lemma}
\label{lemma:expect_monom}
\color{black}
It is sufficient for a panel
$Q_{j}$ to request from panel $Q_{i}$ $i=1, 2, \dots, j-1$ the non-central moments of $\bb{Y}$ with the corresponding set of exponent vectors $A_{i\rightarrow j}(\boldsymbol{g}%
_{j})$
\begin{align}
\label{eq:exponents_manipulation}
A_{i\rightarrow j}(\boldsymbol{g}_{j})=\left\{  \boldsymbol{a}%
_{i\rightarrow j}(\boldsymbol{g}_{j}%
):\mu_{\boldsymbol{a}_{i\rightarrow j}%
(\boldsymbol{g}_{j})}=\mathbb{E}\left(
{\displaystyle\prod\limits_{k=1}^{i}}
 Y_{k}^{a_{i\rightarrow j}(k)}\right)  \right\},
\end{align}
where $\boldsymbol{a}_{i\rightarrow%
j}(\boldsymbol{g}_{j})=\left(  a_{i\rightarrow
j}(1),a_{i\rightarrow j}(2),\ldots,a_{i\rightarrow j}(i),0,0,\ldots,0\right)
$ is given by
\begin{align}
\label{eq:power_pass2}
&\boldsymbol{a}_{i\rightarrow j}(\bb{g}_j)=\bb{b}_{j}+\boldsymbol{g}%
_{j}\left(  i\right), \quad \bb{b}_j\in A_j^{g_j(j)}
\end{align}
and $\boldsymbol{g}_{j} (i) = (  g_{j}(1),g_{j}%
(2),\dots,g_{j}(i),0,0,\dots,0)$.
\end{lemma}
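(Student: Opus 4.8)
The plan is to compute $\mu_{\bb{g}_j}$ directly from its definition \eqref{eq:moment_def} and simply read off which $\bb{Y}$-moments the computation forces $Q_j$ to request. First I would substitute the regression form of $Y_j$ into \eqref{eq:moment_def} and expand the power $g_j(j)$ by the multinomial theorem, exactly as in the proof of Proposition \ref{th:polynomialder}, but now carrying the full trailing monomial $\prod_{i=1}^{j-1}Y_i^{g_j(i)}$ in place of the single factor $\bb{Y}_{\Pi_j}^{\bb{g}}$. Writing $b=g_j(j)$ and peeling off the error term $v_j$ by the binomial theorem, the expectation becomes a finite sum of products $\mathbb{E}[v_j^k]\,\mathbb{E}(\prod_{\bb{a}_i}\theta_{j\bb{a}_i}^{t_{\bb{a}_i}})\,\mathbb{E}[\bb{Y}^{\bb{b}_j+\bb{g}_j(j-1)}]$, the factorisation into three independent expectations being justified by Assumption \ref{assume:independence} (namely $v_j\amalg(\bb{\theta}_j,\bb{Y}_{\Pi_j})$ and $\bb{\theta}_j\amalg\bb{Y}_{\Pi_j}$), where $\bb{b}_j$ ranges over a powered exponent set and $\bb{g}_j(j-1)$ denotes $\bb{g}_j$ with its $j$-th coordinate deleted.

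The heart of the argument is the bookkeeping of the $\bb{Y}$-exponents $\bb{b}_j+\bb{g}_j(j-1)$. By Lemma \ref{lemma:5.1}, any coefficient $\theta_{j\bb{a}_j}$ whose exponent $\bb{a}_j$ is not supported on $\Pi_j$ has zero expectation, so every surviving $\bb{b}_j$ is supported on $\Pi_j\subseteq\{1,\dots,j-1\}$; consequently $\bb{b}_j+\bb{g}_j(j-1)$ is a genuine monomial in variables of index strictly below $j$. I would then apply the leading-term routing convention: each such monomial is assigned to the panel $Q_i$ whose index $i$ is its highest occupied coordinate, so the required moment is $\mathbb{E}(\prod_{k=1}^{i}Y_k^{a(k)})$ with zeros beyond $i$. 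Because the monomial carries no factor of index above $i$, the truncation $\bb{g}_j(j-1)$ coincides with $\bb{g}_j(i)$ on the relevant coordinates, so the requested exponent is precisely $\bb{a}_{i\rightarrow j}(\bb{g}_j)=\bb{b}_j+\bb{g}_j(i)$ as in \eqref{eq:power_pass2}. Grouping the arising monomials by their leading index $i$ yields the sets $A_{i\rightarrow j}(\bb{g}_j)$, and since $\mu_{\bb{g}_j}$ is a finite linear combination of the $\mu_{\bb{a}_{i\rightarrow j}(\bb{g}_j)}$ with coefficients assembled solely from $Q_j$'s local data (the error moments $\mathbb{E}[v_j^k]$ and the joint moments of its own coefficients $\bb{\theta}_j$), requesting exactly these moments is sufficient to evaluate $\mu_{\bb{g}_j}$.

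The one point needing care — and the main obstacle — is reconciling the single powered set $A_j^{g_j(j)}$ quoted in \eqref{eq:power_pass2} with the family $A_j^{g_j(j)-k}$, $k=0,\dots,g_j(j)$, that the binomial expansion of the error actually produces. I would resolve this by noting that selecting the $v_j$ factor contributes the zero vector to the $\bb{Y}$-exponent: since the model carries an intercept, $\bb{0}\in A_j$, and padding any sum of $g_j(j)-k$ elements of $A_j$ with $k$ copies of $\bb{0}$ shows $A_j^{g_j(j)-k}\subseteq A_j^{g_j(j)}$ for each $k$, whence $\bigcup_{k=0}^{g_j(j)}A_j^{g_j(j)-k}=A_j^{g_j(j)}$ and the constraint $\bb{b}_j\in A_j^{g_j(j)}$ captures every exponent that arises. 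Absent an intercept one still obtains the sufficiency asserted by the lemma, as the set $\{\bb{a}_{i\rightarrow j}(\bb{g}_j)\}$ is then merely a superset of what is strictly required, which does not affect the conclusion.
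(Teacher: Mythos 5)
Your proof takes essentially the same route as the paper's own, whose entire argument is one sentence: substitute into Proposition \ref{th:polynomialder}, let each joint moment be computed by the highest-indexed panel appearing in it, and relax the restriction to the parent set $\bb{Y}_{\Pi_j}$. Your write-up is a correct and far more explicit unpacking of that sentence: the binomial/multinomial expansion, the factorisation of expectations under Assumption \ref{assume:independence}, the appeal to Lemma \ref{lemma:5.1} to confine the surviving exponents to indices below $j$, and the check that $\bb{g}_j(j-1)$ coincides with $\bb{g}_j(i)$ under leading-term routing are exactly the steps the paper leaves implicit. Moreover, your observation that the expansion in $v_j$ produces the family of sets $A_j^{g_j(j)-k}$, $k=0,\dots,g_j(j)$, rather than the single set $A_j^{g_j(j)}$ quoted in \eqref{eq:power_pass2}, identifies a genuine subtlety the paper passes over in silence, and your resolution when $\bb{0}\in A_j$ (pad any sum of $g_j(j)-k$ exponents with $k$ copies of $\bb{0}$) is correct.

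Your closing claim about the intercept-free case, however, is false: the inclusion runs the other way. If $\bb{0}\notin A_j$, then $A_j^{g_j(j)}$ need not contain $A_j^{g_j(j)-k}$, and the terms involving $v_j^k$ with $k\geq 2$ --- whose weights $m_k(v_j)$ are generically nonzero, e.g.\ $m_2(v_j)=V_j$ --- then require moments lying \emph{outside} the requested set. In that case the lemma's set is a proper subset of what is needed, so sufficiency fails; it is not a harmless superset. Concretely, take $Y_j=\theta_j Y_1+v_j$ (so $A_j=\{(1)\}$), with $g_j(j)=3$ and $\bb{g}_j(i)=\bb{0}$; then
\begin{equation*}
\mathbb{E}\big[Y_j^3\big]=\mathbb{E}\big[\theta_j^3\big]\,\mathbb{E}\big[Y_1^3\big]+3\,\mathbb{E}\big[\theta_j\big]\,\mathbb{E}\big[Y_1\big]\,V_j+m_3(v_j),
\end{equation*}
which requires $\mathbb{E}[Y_1]$, while $A_j^3=\{(3)\}$ licenses only the request for $\mathbb{E}[Y_1^3]$. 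The case is not vacuous, since the paper's own Example \ref{ex:special} uses intercept-free models. The clean repair is either to assume $\bb{0}\in A_j$ throughout, or to restate \eqref{eq:power_pass2} with $\bb{b}_j\in\bigcup_{k=0}^{g_j(j)}A_j^{g_j(j)-k}$; under either fix your main argument goes through verbatim.
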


\begin{proof}
\textcolor{black}{For sufficiency, substituting into Proposition \ref{th:polynomialder}, noting that under the message passing algorithm the panels below $Q_{j}$ defer the computation of any joint moments to the highest indexed such panel and relaxing the assumption that only parent set of $Y_j$, $\bb{Y}_{\Pi_j}$, is considered.  
}
\end{proof}
Define,
\begin{align}
\label{eq:A_rj}
A_{i\rightarrow j}=
{\displaystyle\bigcup\limits_{\boldsymbol{g}_{j}\in A_{j}^{-}}}
A_{i\rightarrow j}(\boldsymbol{g}_{j}),
\end{align}
\textcolor{black}{where term $i$ is the leading term in interactions. }
Then to derive Algorithm \ref{alg:cap} for generating and passing moments across $\mathcal{G}$, we combine the message passing algorithm in Section \ref{subsec:messagepass} with Lemma \ref{lemma:expect_monom}.

\begin{algorithm}
\caption{An algorithm for generating and passing moments across $\mathcal{G}$}\label{alg:cap}
\begin{algorithmic}
\STATE  For the decision-maker $Q_{m+1}$
derive $A_{m+1}^+$ and $\Lambda_{m+1}^+$

\FOR {$j=m, m-1, \dots, 2, 1$} 
  \STATE Generate sets $A_j^-=\bigcup_{k=j+1}^{m+1}A_{j\rightarrow k}$ and $\Lambda_{j}^-$ assembling the requests by panels $Q_{k}$, $k=j+1, \dots, m+1$
  \FOR {$\bb{g}_j\in A_j^-$}
  \STATE Obtain $A_{i\rightarrow j}(\bb{g}_j)$  $i=1, 2, \dots, j-1$ by Lemma \ref{lemma:expect_monom}
  \ENDFOR 
  \STATE Assemble exponent set $A_{i\rightarrow j} = \bigcup_{\bb{g}_j\in A_j^-}A_{i\rightarrow j}(\bb{g}_j)$ and $\Lambda_{i\rightarrow j}$, $i=1, 2, \dots, j-1$.
  \STATE Assemble set of exponent vectors $A_j^+=\bigcup_{i=1}^{j-1} A_{i\rightarrow j}$ and $\Lambda_j^+$
\ENDFOR
\STATE After all the requests have been sent, enter the donate phase and assemble the requested expectations that need to be produced by each panel to be passed forward to compute utility scores.
\end{algorithmic}
\end{algorithm}

\color{black}
\subsection{Complexity of a network of panels}
\label{subsec: explicitformula}

In Section \ref{subsec:complexity_panel} we considered the complexity of a single panel's calculations determined by the number of monomial terms that needed to be generated. When considering a network of panels, we proceed to characterise the complexity of the network by the number of donations that each panel needs to produce. Similarly, we can define sets of corner points $\big(A_{r\rightarrow j}\big)^*$ and $\big(A_r^- \big)^*$ and the corresponding closure sets $\overline{\big(A_{r\rightarrow j}\big)^*}$ and  $\overline{\big(A_r^- \big)^*}$ for $A_{r\rightarrow j}$ and $A_r^-$ respectively, for which we can write
\begin{align*}\label{eqn:subset_rewrite}
&\big(A_{r\rightarrow j}\big)^*\subseteq A_{r\rightarrow j}\subseteq \overline{\big(A_{r\rightarrow j}\big)^*},\\
&\big(A_r^- \big)^*\subseteq A_r^-\subseteq \overline{\big(A_r^- \big)^*}.
\end{align*}
We present results for the classes of polynomial regression models introduced in Section \ref{sec:graphical}.
\begin{corollary}
\label{thm:nonfounder}  
Suppose that all non-founder panels $Q_{j}$ use a full model  on a univariate output variable $Y_{j}$.  
Then $Q_i$, a panel that needs to donate to $Q_j$, will need to donate the expectations of monomials corresponding to
\begin{equation}
A_{i\rightarrow j}=\overline{\big(A_{i\rightarrow j}\big)^{\ast}}, \quad j=i+1,i+2, \dots, m+1,
\end{equation}
where the exponent set is the completion of corner points. The corner points in $\big(A_{i\rightarrow j}\big)^*$ are given by
\begin{align*}
\label{eq:corner_points_donations}
 \boldsymbol{a}_{i\rightarrow j}(\boldsymbol{g}_j)=\bb{b}_j+\boldsymbol{g}_{j}\left(  i\right) , \quad \text{where }\bb{b}_j\in \Big(A_{j}^{g_j(j)}\Big)^*\text{ and }\boldsymbol{g}_{j}\in\big(A_j^-\big)^*.
\end{align*}  
\end{corollary}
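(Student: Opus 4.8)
The plan is to prove both assertions---that $A_{i\rightarrow j}$ is full, i.e. $A_{i\rightarrow j}=\overline{(A_{i\rightarrow j})^{\ast}}$, and that its corner points have the stated product-order form---by a downward induction on the target index $j$, running from the decision maker $Q_{m+1}$ down to $Q_1$, with the decision maker's utility monomial set $A_{m+1}$ seeding the recursion. The inductive hypothesis I would carry is that every donation set $A_k^-$ with $k>j$ is full (down-closed). Before the induction I would record the two enabling facts. First, because $Q_j$ uses a full model, $A_j=\bar{A}_j^{\ast}$ is down-closed and contains the zero exponent vector. Second, by Lemma~\ref{lemma_simple} the powered set $A_j^{g_j(j)}$ is then itself full, with $(A_j^{g_j(j)})^{\ast}=(A_j^{\ast})^{g_j(j)}$ and $\overline{(A_j^{g_j(j)})^{\ast}}=A_j^{g_j(j)}$. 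These reduce every statement about a panel's internal monomial expansion to a statement about down-closed sets.

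For fullness I would use Lemma~\ref{lemma:expect_monom} to write $A_{i\rightarrow j}=\bigcup_{\bb{g}_j\in A_j^-}\bigl(A_j^{g_j(j)}+\{\bb{g}_j(i)\}\bigr)$. A union of down-closed sets is down-closed, so the only issue is that each individual translate $A_j^{g_j(j)}+\{\bb{g}_j(i)\}$ is \emph{not} down-closed; the down-closedness must be recovered from the union. I would prove this by an explicit split: given $\bb{a}=\bb{b}_j+\bb{g}_j(i)\in A_{i\rightarrow j}$ and any $\bb{a}'\leq\bb{a}$, set on the first $i$ coordinates $g_j'(k)=\min(g_j(k),a'(k))$ and $b_j'(k)=a'(k)-g_j'(k)=\max(0,a'(k)-g_j(k))$, while keeping $g_j'(j)=g_j(j)$. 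Then $\bb{g}_j'\leq\bb{g}_j$ gives $\bb{g}_j'\in A_j^-$ by the inductive down-closedness of $A_j^-$, and $\bb{b}_j'\leq\bb{b}_j$ gives $\bb{b}_j'\in A_j^{g_j(j)}=A_j^{g_j'(j)}$ by the second enabling fact; hence $\bb{a}'=\bb{b}_j'+\bb{g}_j'(i)\in A_{i\rightarrow j}$. The inductive hypothesis is then maintained at level $j$ because $A_j^-=\bigcup_{k>j}A_{j\rightarrow k}$ is a union of sets already shown full at higher indices.

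For the corner-point formula I would combine Lemma~\ref{lemma:corner_points1} (the corner points of $G+bA$ are $\bb{g}^{\ast}+b\bb{a}^{\ast}$) with its companion result that the corner points of $\cup_i b_iA$ coincide with those of $b^{\ast}A$, $b^{\ast}=\max_i b_i$. Fixing a corner point $\bb{g}_j^{\ast}\in(A_j^-)^{\ast}$, the contribution $A_j^{g_j^{\ast}(j)}+\{\bb{g}_j^{\ast}(i)\}$ has top corners $(A_j^{\ast})^{g_j^{\ast}(j)}+\{\bb{g}_j^{\ast}(i)\}$, which are exactly the candidate points $\bb{b}_j+\bb{g}_j(i)$ with $\bb{b}_j\in(A_j^{g_j(j)})^{\ast}$ and $\bb{g}_j\in(A_j^-)^{\ast}$ in the statement. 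I then need both directions: each candidate is maximal in $A_{i\rightarrow j}$ (undominated because both its translation part and its power part sit at corners), and conversely any point coming from a non-corner $\bb{g}_j$ is dominated, since $\bb{g}_j\leq\bb{g}_j'$ in $A_j^-$ forces simultaneously $\bb{g}_j(i)\leq\bb{g}_j'(i)$ and $A_j^{g_j(j)}\subseteq A_j^{g_j'(j)}$, so the whole translated piece for $\bb{g}_j$ lies under that for $\bb{g}_j'$.

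The routine parts are the min/max fullness decomposition and the mechanical application of the two corner-point lemmas. The main obstacle I anticipate is the corner-point reduction across the union over \emph{varying} powers $g_j(j)$: unlike Lemma~\ref{lemma:corner_points1}, the exponent $b=g_j(j)$ is not fixed, so I must verify a genuine joint monotonicity---that increasing $\bb{g}_j$ in the product order enlarges the translation $\bb{g}_j(i)$ and, via $A_j^{g_j(j)}\subseteq A_j^{g_j'(j)}$, the powered corner set simultaneously---so that only the corners $\bb{g}_j^{\ast}\in(A_j^-)^{\ast}$ survive and no trade-off between the translation and the power creates a spurious maximal point. A secondary technical point is the bookkeeping of the leading-term routing in Lemma~\ref{lemma:expect_monom}: I must check that the index $i$ receiving each monomial is consistent with the support of $\bb{b}_j+\bb{g}_j(i)$ (so the truncation to the first $i$ coordinates is exact), which is what guarantees the decomposition above stays inside $A_{i\rightarrow j}$ rather than leaking into donations aimed at a higher-indexed panel.
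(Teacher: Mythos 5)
Your proposal follows essentially the same route as the paper's own proof: a backward induction through the panels, combining fullness of each regression model with Lemma~\ref{lemma_simple} (powered exponent sets), Lemma~\ref{lemma:expect_monom} (structure of the donation sets), and Lemma~\ref{lemma:corner_points1} (corner points of sums). The paper's proof is exactly this iteration stated tersely, starting at $Q_m$ and working down; yours is more careful in the places the paper leaves implicit, namely the explicit min/max split showing that $\bigcup_{\bb{g}_j}\bigl(A_j^{g_j(j)}+\{\bb{g}_j(i)\}\bigr)$ is down-closed, and the observation that $\bb{g}_j\leq\bb{g}_j'$ forces both $\bb{g}_j(i)\leq\bb{g}_j'(i)$ and $A_j^{g_j(j)}\subseteq A_j^{g_j'(j)}$ (the latter because a full model contains the zero exponent vector).

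One caveat: your plan to prove \emph{both} directions for the corner points contains a false sub-claim. It is not true that every candidate $\bb{b}_j+\bb{g}_j(i)$ with $\bb{b}_j\in\big(A_j^{g_j(j)}\big)^*$ and $\bb{g}_j\in\big(A_j^-\big)^*$ is maximal in $A_{i\rightarrow j}$; being assembled from corners does not prevent domination by another candidate. For instance, with $j=3$, $i=2$, $A_3^*=\{(1,0,0),(0,2,0)\}$ and $\big(A_3^-\big)^*=\{(0,1,1),(1,0,2)\}$, the candidate $(1,0,0)+(0,1,0)=(1,1,0)$ arising from the first request is strictly dominated by the candidate $(1,2,0)+(1,0,0)=(2,2,0)$ arising from the second, since $(1,2,0)\in\big(A_3^2\big)^*$. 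This does not damage the corollary: like Lemma~\ref{lemma:corner_points1}, the statement only asserts a containment, namely that every corner point is \emph{of} the stated form, and that is precisely the direction your monotonicity and domination argument establishes. Dropping the converse claim leaves your proof complete and correct, and in line with the paper's.
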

\begin{proof}
Start with a panel $Q_m$ and the exponent set $A_m$ for a regression model for $Y_m$. Since $Y_m$ is a full model, i.e. $A_m=\bar{A}_m^*$, we can deduct that $A_{i\rightarrow m}=\overline{\big(A_{i\rightarrow m} \big)^*}$, where $i\in \Pi_m$. For a panel $Q_{m-1}$, we can derive $A_{m-1}^-$ that contains exponents including $Y_{m-1}$ to a non-zero power. By Lemma \ref{lemma:expect_monom} and Equation (\ref{eq:exponents_manipulation}) we can deduce $A_{i\rightarrow m-1}(\bb{g}_{m-1})$, where $\bb{g}_{m-1}\in A_{m-1}^-$. Since $A_{m-1\rightarrow m}=\overline{\big(A_{m-1\rightarrow m}\big)^*}$ and $Y_{m-1}$ is a full model, we can derive that $A_{i\rightarrow m-1} = \overline{\big(A_{i\rightarrow m-1}\big)^*}$. Therefore, proceeding iteratively, we can demonstrate that $A_{i\rightarrow j}=\overline{\big(A_{i\rightarrow j}\big)^*}$ holds in general for $j=i+1, \dots, m-1$. To derive the corner points in $\big(A_{i\rightarrow j}\big)^*$, we use Lemma \ref{lemma:corner_points1} with $b=1$ and Lemma \ref{lemma_simple}.

\end{proof}
This result implies that the necessary message-passing between panels can be computationally expensive, but still manageable. We introduce a simple example to demonstrate this.
\begin{example}
   Consider $\mathcal{G}$ in Example \ref{ex:special}  with a full regression model for all non-founder variables:
    \begin{align*}
Y_{4}  & =\theta_{4,0000}+\theta_{4,0100}Y_{2}+\theta_{4,0010}Y_{3}+\theta_{4,0200}%
Y_{2}^{2}+\theta_{4,0110}Y_{2}Y_{3}+\theta_{4,0020}Y_{3}^{2}+v_{4},\\
Y_{3}  & =\theta_{3,0000}+\theta_{3,1000}Y_{1}+\theta_{3,2000}Y_{1}^{2}+v_{3},\\
Y_{2}  & =\theta_{2,0000}+\theta_{2,1000}Y_{1}+v_{2}.
\end{align*}
Assuming the utility function only depends on $Y_4$, 
we have that
$A_4 = \bar{A}_4^*$, where $$A_4^*=\{ (0, 2, 0, 0), (0, 1, 1, 0), (0, 0, 2, 0)\}$$
and $A_4^-=\{(0, 0, 0, 1) \}$.
By Corollary \ref{thm:nonfounder}, we can therefore obtain that:
\begin{align*}
    &\big(A_{3\rightarrow 4}\big)^* = \{(0, 0, 2, 0), (0, 1, 1, 0) \}, \quad \text{and }A_{3\rightarrow 4} = \overline{\big(A_{3\rightarrow 4}\big)^*}=\{(0, 0, 1, 0), (0, 0, 2, 0), (0, 1, 1, 0) \}, \\
    &\big(A_{2\rightarrow 4}\big)^* = \{(0, 2, 0, 0) \}, \quad \text{and }A_{2\rightarrow 4} = \overline{\big(A_{2\rightarrow 4}\big)^*}=\{(0, 1, 0, 0), (0, 2, 0, 0) \}.
\end{align*}
We then have $A_3^-=\{(0, 0, 1, 0), (0, 0, 2, 0), (0, 1, 1, 0) \}$, and the set of corner points of the regression model for $Y_3$ is $A_3^*=\{(2,0,0, 0) \}$. By Corollary \ref{thm:nonfounder}, we can derive that 
\begin{align*}
    &\big(A_{2\rightarrow 3}\big)^* = \{(2,1,0, 0) \}, \quad \text{and }A_{2\rightarrow 3} = \overline{\big(A_{2\rightarrow 3}\big)^*}=\{(0, 1, 0, 0), (1, 1, 0, 0), (2, 1, 0, 0) \}, \\
    &\big(A_{1\rightarrow 3}\big)^* = \{(4, 0, 0, 0) \}, \quad \text{and }A_{1\rightarrow 3} = \overline{\big(A_{1\rightarrow 3}\big)^*}=\{(1,0,0, 0), (2,0,0, 0), (3,0,0, 0), (4,0,0,0) \}.
\end{align*}
For panel $Q_2$, we find $A_2^- = \{(0,1,0, 0), (1,1,0, 0), (2,1,0, 0), (0,2,0, 0) \}$ and the set of corner points 
is $A_2^*=\{(1,0,0, 0) \}$, and
\begin{align*}
    &\big(A_{1\rightarrow 2}\big)^* = \{(3,0,0, 0) \}, \quad \text{and }A_{1\rightarrow 2} = \overline{\big(A_{1\rightarrow 2}\big)^*}=\{(1,0,0,0), (2,0,0, 0), (3,0,0, 0) \}.
\end{align*}
\end{example}
It follows that we can identify a class of graphs, P\'olya forests \cite{Pearl1988}, used with a utility function with attributes only on the terminal nodes of $\mathcal{G}$. A simple case also occurs when $\mathcal{G}$ is a directed tree and the utility function is linear in its different components. In a Bayesian decision analysis this translates to a situation when the components of the utility are preference-independent attributes whose marginal utilities are all polynomial \cite{Leonelli2020}.

\begin{corollary}
\label{cor:tree}
For $d\in \mathcal{D}$, let   $U(\bb{Y},d)$ have the form
\begin{align*}
U(\bb{Y},d)=\sum_{i=1}^{m}k_iU_{i}(Y_i, d)=\sum_{i=1}^m k_i\Big(\sum_{k=0}^{c_{i}}\rho_{ik}(d)Y_{i}^{k}\Big),
\end{align*}
where the coefficients $\rho_{ik}(d)\in\mathbb{R}$. Suppose each $Y_{i}$, $i=2,\ldots,m$ has at most one
parent, i.e., $\bb{Y}_{\Pi_i}=\{X_i \}$, so that $\mathcal{G}$ is a forest. Then all non-founder variables are governed by a simple model defined as
\begin{align*}
Y_i=\sum_{j=0}^{b_{i}}\theta_{ij}X_{i}^{j} +v_i.
\end{align*}
For the decision-maker $Q_{m+1}$ to obtain the expected utility, panel $Q_{i}$ is required 
to donate $\Lambda_{i}^{-}$ with the corresponding set of exponent vectors $A_i^-=\overline{(A_i^-)^*}$ with
$(A_{i}^-)^{\ast}= \{   (  0,\ldots0,a_i(i),0,\ldots,0 )   \}$
where $a_{i}(i)$ is defined 
\begin{equation}
\label{eq:recursive_cal_mod}
  a_{i}(i)=\begin{cases}
    c_i  &\text{if }Y_i\text{ is a terminal node}, \\
\max\Big(c_i,  \max_{k: (i, k)\in E(\mathcal{G})}\left\{
b_{k}a_{k}(k)\right\}\Big)&\text{otherwise},
  \end{cases}
\end{equation}
and $E(\mathcal{G})$ is an edge set of ordered pairs of vertices in graph $\mathcal{G}$.
\color{black}
\end{corollary}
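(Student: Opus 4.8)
The plan is to argue by backward induction on the topological order, computing the donation sets from the terminal nodes of each tree up to its root, and to show at each step that the forest structure together with the separability of $U$ confines every requested monomial to a single coordinate axis. First I would record two structural observations. Because $U(\bb{Y},d)=\sum_i k_i U_i(Y_i,d)$ is a sum of univariate polynomials, the decision-maker $Q_{m+1}$ never requests a genuine cross-moment: from panel $Q_i$ it asks only for the pure powers $\mathbb{E}[Y_i^k]$, $k=0,\dots,c_i$, contributing a single corner point $c_i$ on the $i$-th axis to $A_{i\rightarrow m+1}$. Because each non-founder $Y_i$ has the single parent $X_i$, its simple model $Y_i=\sum_{j=0}^{b_i}\theta_{ij}X_i^{j}+v_i$ is a polynomial in that one parent alone, so its exponent set $A_i=\{(0,\dots,0,j,0,\dots,0):0\le j\le b_i\}$ lies entirely on the parent's axis with corner point $b_i$.

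Next I would set up the induction. Processing vertices in decreasing index (so that every child $k$ of $i$, i.e.\ every $k$ with $(i,k)\in E(\mathcal{G})$, is treated before $i$), assume each such child already satisfies $A_k^-=\overline{(A_k^-)^*}$ with $(A_k^-)^*=\{(0,\dots,0,a_k(k),0,\dots,0)\}$. To produce its donation $\mathbb{E}[Y_k^{t}]$ for $t\le a_k(k)$, panel $Q_k$ substitutes its simple model and, by Lemma \ref{lemma:expect_monom} with $\bb{g}_k=(0,\dots,0,t,0,\dots,0)$, requests from its unique parent $Q_i$ the moments with exponents $\bb{a}_{i\rightarrow k}(\bb{g}_k)=\bb{b}_k$, $\bb{b}_k\in A_k^{t}$. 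Since $A_k$ sits on the $i$-th axis with corner $b_k$, Lemma \ref{lemma_simple} gives $(A_k^{t})^*=(A_k^*)^{t}=\{b_k t\}$, so the highest power of $Y_i$ that child $k$ can force is $b_k a_k(k)$, again on the $i$-th axis and with no cross term because $\bb{g}_k(i)=\bb{0}$. Taking the union of the request from the decision-maker and from all children then yields $(A_i^-)^*=\{(0,\dots,0,a_i(i),0,\dots,0)\}$ with $a_i(i)=\max(c_i,\max_{k:(i,k)\in E}b_k a_k(k))$; for a terminal node the child set is empty and $a_i(i)=c_i$, which is exactly Equation \eqref{eq:recursive_cal_mod}.

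It remains to verify $A_i^-=\overline{(A_i^-)^*}$, that is, that every intermediate power $0,1,\dots,a_i(i)$ is actually requested and not merely the top one. I would argue that expanding $Y_k^{t}=(\sum_{j=0}^{b_k}\theta_{kj}Y_i^{j}+v_k)^{t}$ by the multinomial theorem produces, as $t$ ranges over $0,\dots,a_k(k)$, every exponent of $Y_i$ from $0$ up to $b_k a_k(k)$ (consecutive integers are reachable once $b_k\ge 1$), so each $A_{i\rightarrow k}$ is already a full closure; unioning with the decision-maker's block $\{0,\dots,c_i\}$ keeps the set downward closed. The main obstacle, and the step deserving the most care, is precisely this bookkeeping: one must check that the single-parent structure forces $\bb{g}_k(i)=\bb{0}$ so that no off-axis monomial is ever generated, and that the leading-term routing of Definition \ref{def:donate_request} always sends a child's request to its unique parent, so that the recursion closes on the $i$-th coordinate alone. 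Once these are confirmed, the multiplicative propagation $t\mapsto b_k t$ through the tree gives the stated recursion directly.
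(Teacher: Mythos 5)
Your proof is correct and follows essentially the same route as the paper: the paper's own proof is a single line --- substitute the forest structure and separable polynomial utility into Corollary \ref{thm:nonfounder} --- and your backward induction via Lemma \ref{lemma:expect_monom} and Lemma \ref{lemma_simple} is exactly what that substitution unpacks to, since Corollary \ref{thm:nonfounder} is itself proved by the same iterative corner-point propagation from the terminal panels downward. Your explicit checks of downward closure and of the absence of off-axis monomials make the argument more self-contained than the paper's citation, but they do not constitute a different method.
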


\begin{proof}
These results can be obtained directly from Corollary \ref{thm:nonfounder} by substituting the conditions above.
\end{proof}
Thus each $Q_i$ needs to provide expectations of $Y_i$ up to the power $a_i(i)$ given above. We consider how the depth of the tree, the complexity of the utility function, and regression models for individual variables can significantly increase the number of non-central moments that $Q_i$ needs to deliver.

\begin{example}
    Consider panels $\{Q_1, Q_2, Q_3, Q_4, Q_5, Q_6 \}$ and decision center $Q_7$ with utility function of the form:
    \begin{align*}
        U(\bb{Y}, d) &= k_1\Big(\sum_{k=0}^7 \rho_{1k}(d)Y_1^k \Big) + k_2\Big(\sum_{k=0}^1 \rho_{2k}(d)Y_2^k \Big) + k_3\Big(\sum_{k=0}^4 \rho_{3k}(d)Y_3^k \Big)+ k_4\Big(\sum_{k=0}^5 \rho_{4k}(d)Y_4^k \Big) \\
        &+k_5\Big(\sum_{k=0}^2 \rho_{5k}(d)Y_5^k \Big) + k_6\Big(\sum_{k=0}^4 \rho_{6k}(d)Y_6^k \Big)
    \end{align*}
       and with $\mathcal{G}$:
\[
\begin{array}
[c]{ccccccc}
& & & & & & Y_4\\
& & & & & \nearrow  &\\
Y_{1} & \rightarrow  & Y_{2} & \rightarrow & Y_{3} & \rightarrow & Y_{5}\\
& & & & & \searrow & \\
& & & & & & Y_{6}
\end{array}
\]
For each non-founder vertex $Y_j, j=1, \dots, 6$, we specify regression models:
\begin{align*}
    Y_6&=\sum_{k=0}^3\theta_{6,k}Y_3^k+v_6, \quad Y_5=\sum_{k=0}^1\theta_{5,k}Y_3^k+v_5,\quad Y_4=\sum_{k=0}^7\theta_{4,k}Y_3^k+v_4,\\
    Y_3&=\sum_{k=0}^2\theta_{3,k}Y_2^k+v_3,\quad
    Y_2  =\sum_{k=0}^4\theta_{2,k}Y_1^k+v_2.
\end{align*}
The donations from the terminal nodes are derived directly from the utility function:
\begin{align*}
    &G_4=A_4^-=\overline{(A_4^-)^*}\text{ with }(A_4^-)^*=\big\{(0,0,0,5,0,0) \big\}, \\
    &G_5=A_5^-=\overline{(A_5^-)^*}\text{ with }(A_5^-)^*=\big\{(0,0,0,0,2,0) \big\},\\
    &G_6=A_6^-=\overline{(A_6^-)^*}\text{ with }(A_6^-)^*=\big\{(0,0,0,0,0,4) \big\}.
\end{align*}
We next obtain those donations required from the remaining panels, where we are effectively comparing the donations to children nodes as well as to the decision center using Equation (\ref{eq:recursive_cal_mod}):
$$
G_3 = A_3^-=\overline{(A_3^-)^*}\text{ with }(A_3^-)^*=\{(0,0,35,0,0,0) \},  
$$
since
$$
a_3(3) = \max\big(c_3, \max(b_4a_4(4), b_5a_5(5), b_6a_6(6)) \big) = \max\big(4, \max(35, 2, 12) \big) = 35.
$$
Similarly, we can calculate that 
\begin{align*}
&G_2=A_2^-=\overline{(A_2^-)^*}\text{ with }(A_2^-)^*=\big\{(0,70,0,0,0,0) \big\}, \\
&G_1=A_1^-=\overline{(A_1^-)^*}\text{ with }(A_1^-)^*=\big\{(280,0,0,0,0,0) \big\}.
\end{align*}
\end{example}
Note here that the most detailed information, in terms of the number of moments, is required from the panels lowest in the tree.
We conclude this section with the worst-case scenario for graphical models where the underlying graph is complete. We present an explicit bound for the number of moments that need to be generated to compute the expected utility when each model is a graphical regression model.

\begin{corollary}
\label{cor:number_of_moments}
Suppose that both the utility function and all the models are graphical on
the maximum number of regressors on a complete graph $\mathcal{G}$. Then the total
number of non-central moments, denoted as $\#\left(  A^{-}\right)$, that all panels will
need to donate for the expected utility scores to be calculated is 

$$
 \#\left(  A^{-}\right)=\sum_{j=1}^{m} \# \overline{\big(A_j^-\big)^*}=\sum_{j=1}^m \Big(a^*_j(j)+1 \Big)^{j-1}\times a^*_j(j),
$$
where $a^*_j(j)$ is the $j$\textsuperscript{th} element of the corner point $\bb{a}^*_j$ and $
\big(A_j^-\big)^*=\{(a^*_j(1), \dots, a^*_j(j), 0, \dots, 0) \}.$

\end{corollary}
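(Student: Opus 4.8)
The plan is to exploit the rigidity forced by the complete-graph graphical hypothesis: on a complete $\mathcal{G}$ the parents of $Y_j$ form a single clique, so by Definition \ref{def:types_of_models} the graphical model for $Y_j$ has a \emph{unique} corner point, namely $(1,\dots,1,0,\dots,0)$ with ones in positions $1,\dots,j-1$, and the graphical utility has corner point $(1,\dots,1)$. The whole argument then reduces to showing that each donation set $A_j^-$ is a ``box'' under the product order whose corner is symmetric, after which the stated count is purely combinatorial.

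First I would record the key consequence of having a single corner point. By Lemma \ref{lemma_simple}, $(A_j^b)^\ast=(A_j^\ast)^b=\{(b,\dots,b,0,\dots,0)\}$, so the leading term of $Y_j^b$ is $(Y_1\cdots Y_{j-1})^b$: every ancestor is carried to the \emph{same} exponent $b$. This symmetry is what I propagate. The core step is a backward induction on $j=m,m-1,\dots,1$ with the hypothesis that the corner point of the donation set has the form $\bb{a}_j^\ast=(a_j^\ast(j),\dots,a_j^\ast(j),0,\dots,0)$, all of its first $j$ entries equal, and that (since $Q_j$ only donates moments that genuinely involve its own output) $A_j^-$ equals the closure of this corner intersected with $\{a(j)\ge 1\}$. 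For the base case $j=m$ the only requests reaching $Q_m$ are the squarefree utility monomials whose highest index is $m$, so $\bb{a}_m^\ast=(1,\dots,1)$ and $a_m^\ast(m)=1$. For the inductive step I would feed the symmetric corner $\bb{a}_{j+1}^\ast$ into Lemma \ref{lemma:expect_monom}: taking $\bb{g}_{j+1}$ equal to the corner and $\bb{b}_{j+1}$ the corner of $A_{j+1}^{g_{j+1}(j+1)}=\{(a,\dots,a,0,\dots,0)\}$ with $a=a_{j+1}^\ast(j+1)$, Equation (\ref{eq:power_pass2}) gives $\bb{a}_{j\to j+1}(\bb{g}_{j+1})=\bb{b}_{j+1}+\bb{g}_{j+1}(j)=(2a,\dots,2a,0,\dots,0)$ on positions $1,\dots,j$. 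Hence $a_j^\ast(j)=2a_{j+1}^\ast(j+1)$, the corner stays symmetric, and this contribution dominates both the direct utility request (exponent $1$) and the sub-dominant long-range requests routed from $Q_k$ with $k>j+1$ (whose coordinates are bounded by $2a_k^\ast(k)<2a_{j+1}^\ast(j+1)$). Fullness, i.e.\ that every point below the corner with $a(j)\ge1$ is actually present, is inherited from Corollary \ref{thm:nonfounder}, since graphical models are full models and the dominating box from $Q_{j+1}$ already contains the boxes of the other contributions.

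With the box established I would count directly: the closure of $\bb{a}_j^\ast$ is $\{\bb{a}:0\le a(i)\le a_j^\ast(j),\ i\le j\}$, of size $(a_j^\ast(j)+1)^{j}$, and removing the face $a(j)=0$ of size $(a_j^\ast(j)+1)^{j-1}$ leaves $(a_j^\ast(j)+1)^{j}-(a_j^\ast(j)+1)^{j-1}=(a_j^\ast(j)+1)^{j-1}\,a_j^\ast(j)$. Summing over $j$ yields $\#(A^-)$ as claimed (and the recursion $a_j^\ast(j)=2a_{j+1}^\ast(j+1)$ with $a_m^\ast(m)=1$ gives the closed form $a_j^\ast(j)=2^{m-j}$, which one may state as a remark).

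The main obstacle is the inductive symmetry claim. One must verify that the componentwise maximum of the aggregated set $A_j^-=\bigcup_{k>j}A_{j\to k}$ is attained \emph{simultaneously} in every coordinate by the single fully expanded leading monomial, so that $A_j^-$ is exactly one box rather than a union of incomparable boxes; this requires carefully bounding the long-range contributions and tracking the truncations $\bb{g}_{j+1}(j)$ together with the routing of each request to the panel indexed by its leading term. Once that bookkeeping is in place, the counting step is immediate.
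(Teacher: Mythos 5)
Your proof is correct and follows essentially the same route as the paper's: a backward recursion from the utility's corner point $(1,\dots,1)$ that doubles the symmetric corner at each step (the paper's $2^r$ entries for $A_{m-r\rightarrow m-r+1}$ are exactly your $a_j^*(j)=2a_{j+1}^*(j+1)$, i.e.\ $2^{m-j}$), fullness of the donation sets via Corollary \ref{thm:nonfounder}, reduction on the complete graph to the single donation $A_{j\rightarrow j+1}$, and the identical box-minus-face count $(a_j^*(j)+1)^{j-1}a_j^*(j)$. The only difference is that you spell out the domination of the long-range and direct-utility requests, which the paper asserts implicitly when it collapses $\bigcup_{i>j}A_{j\rightarrow i}$ to $A_{j\rightarrow j+1}$.
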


\begin{proof}
Rewriting the total number of non-central moments that all panelists will need to donate as 
\begin{equation*}
    \#\left(  A^{-}\right)=\sum_{j=1}^m\#\Big(A_j^- \Big) =\sum_{j=1}^m\#\Big(\bigcup_{i=j+1}^{m+1} A_{j\rightarrow i}\Big). 
\end{equation*}
Since we consider the graphical regression model (a class of full models) for the utility function and all the variables in $\mathcal{G}$, by Corollary \ref{thm:nonfounder}, we can write down 
\begin{equation*}
    \#\left(  A^{-}\right)=\sum_{j=1}^m\#\Big(\bigcup_{i=j+1}^{m+1} A_{j\rightarrow i}\Big)=\sum_{j=1}^m\#\Big(\bigcup_{i=j+1}^{m+1} \overline{\big(A_{j\rightarrow i}\big)^*}\Big).
\end{equation*}
Since we have a complete graph $\mathcal{G}$, the expression can be further simplified to
\begin{equation*}
    \#\left(  A^{-}\right) = \sum_{j=1}^{m} \# \Big(\overline{\big(A_{j\rightarrow j+1}\big)^*}\Big).
\end{equation*}
By definition of the graphical regression model, we write down $(A_{m\rightarrow m+1} )^*=\{(1, 1, \dots, 1) \}$
and iteratively applying the results from Corollary \ref{thm:nonfounder}, we can derive that the corner points in each exponent set, $A_{m-r\rightarrow m-r+1}$ for $r=1, \dots, m-1$,  has the form 
\begin{equation*}
\bb{a}_{m-r\rightarrow m-r+1}(\bb{g}_{m-r+1})=\big(a_{m-r\rightarrow m-r+1}(1),a_{m-r\rightarrow m-r+1}(2) \dots,a_{m-r\rightarrow m-r+1}(m)),
\end{equation*}
where $\bb{g}_{m-r+1}\in \big(A_{m-r+1}^- \big)^*$ and 
$$
a_{m-r\rightarrow m-r+1}(k)=\begin{cases}
    2^r &\text{ if } k=1, \dots, m-r\\
    0 &\text{ if } k=m-r+1, \dots, m.
\end{cases}
$$
We recognise that instead of computing the set of exponents corresponding to donations of panel $Q_i$ to each panel $Q_j$, $j=i+1, \dots, m, m+1$, we need to work backwards starting with the corner point for utility function and deriving the corresponding exponents from each panel requested and then obtaining the closure of the corner points set. We can write down that
$$
\big(A_j^-\big)^*=\big(A_{j\rightarrow j+1}\big)^*=\{(a^*_j(1), \dots, a^*_j(j), 0, \dots, 0) \} \text{ and } \overline{\big(A_j^-\big)^*}=\overline{\big(A_{j\rightarrow j+1}\big)^*}.
$$
Aggregating over those moments dominated by these corner points, we can conclude that 
\begin{equation}
\#\left(  \overline{\big(A_j^-\big)^*}\right)= \Big(a^*_j(j)+1 \Big)^{j-1}\times a^*_{j}(j),
\end{equation}
since the $j$\textsuperscript{th} component of the exponent vector in $\overline{\left(A_j^-\right)^*}$ can take any value in $\{1, \dots,   a^*_j(j)\}$, while $j+1, \dots, m$ entries in this exponents vector are fixed at zero. At the same time, the first $j-1$ entries contain all possible combinations of $\{0, 1, \dots,  a^*_j(j)\}$ represented by $\left(a^*_j(j)+1 \right)^{j-1}$.
\end{proof}

By Corollary \ref{cor:number_of_moments}, it can be checked that the total number of moments that need to be considered becomes large when $m \geq 4$, i.e. when $m=4$, $\# (A^-)=54$ and for $m=5$, $\#(A^-)=258$. This illustrates the importance of building conditional independence assumption into the structure of the graph.

\section{Application to two case studies}
\label{sec:application} 
To demonstrate the methodology we present two cases studies. In Section \ref{subsec:energy_plan}, in an energy planning domain, we study a situation where there are multiple expert panels and where data is hard to obtain. In Section \ref{subsec:blood_gas}, we consider the decision problem centered around the respiratory condition of a patient in an Intensive Care Unit (ICU) in a hospital, where there is essentially one expert panel, a team of medical specialists, with full access to informing data. We show that in this setting the calculations for the methodology can still be supported.

 \subsection{An application in energy planning} 
\label{subsec:energy_plan}
This study is a simplified version of a decision support tool that has been developed with a UK county council to assess the impacts of various heating technologies on operational costs and CO\textsubscript{2}-equivalent emissions. This complex process is characterized by multiple interdependent processes and relies on various sources of information. We demonstrate the type of graphical model for decision support in which various panels need to be coordinated to provide information for assessing different decision options.

We identified variables to be considered: surface temperature $(Y_1)$, fuel prices $(Y_2)$, technologies $(Y_3)$, heating demand $(Y_4)$, capital investment $(Y_5)$, energy usage $(Y_6)$, fuel costs $(Y_7)$, carbon emissions $(Y_8)$ and investment costs $(Y_9)$; with the conditional independence structure illustrated in Figure \ref{fig:EnergyDAG}. 
For simplicity, we omit details of the form of attributes, as this would distract from the discussion below.

 \begin{figure}[h!]
\begin{center}
 \resizebox{0.7\textwidth}{!}{%
\begin{tikzpicture}  
    \node[] (a) at (0,0) {Energy Usage $(Y_6)$};
    \node[] (b) [right =of a] {Carbon Emissions $(Y_8)$};  
    \node[] (d) [below =of a] { Capt. Invest $(Y_5)$};  
    \node[] (c) [below =of b] {Invest. Cost $(Y_9)$};
    \node[] (e) [above =of b] {Fuel costs $(Y_7)$};
    \node[] (f) [left =of a] {Heating Demand $(Y_4)$};
    \node[] (g) [below =of f] {Technologies $(Y_3)$};
    \node[] (h) [above =of f] {Surface Temp. $(Y_1)$};
    \node[] (l) [above =of a] {Fuel prices ($Y_2$)};
    \path (a) edge (b); 
    \path (d) edge (c); 
    \path (d) edge (a);
    \path (a) edge (e);
    \path (f) edge (a);
    \path (g) edge (a);
    \path (h) edge (f);
    \path (l) edge (a);
    \path (g) edge (d);
    \path (l) edge (e);
\end{tikzpicture} 
 }%
\end{center}
     \caption{Graphical model for an energy planning application. } 
    \label{fig:EnergyDAG}
\end{figure}
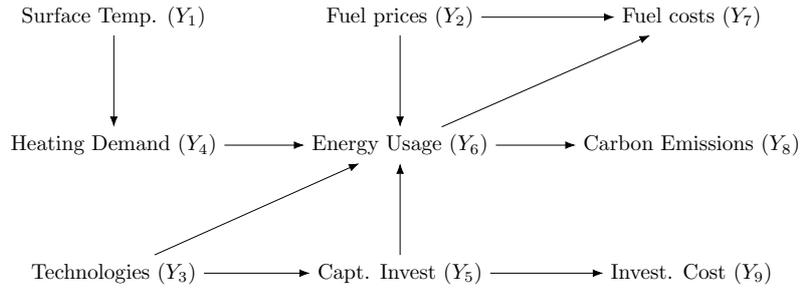
Assume that a decision center, $Q_{10}$, has a utility function linear in the terminal nodes (fuel costs, carbon emissions and investment costs) and   assume that decisions are not arguments of the utility function,
\begin{equation}
    U(\bb{Y}, d)=k_7Y_7+k_8Y_8+k_9Y_9,
\end{equation}
and assume the decision center can compute the expected utility for decision options,
\begin{equation}
\overline{U}(d) = k_7\E{Y_7}+k_8\E{Y_8}+k_9\E{Y_9}.
\end{equation}
 Panel models are specified, for example, fuel costs are directly affected by fuel prices (panel $Q_2$) and energy usage (panel $Q_6$)  \cite{Barons2021,Hall2016},
\begin{align*}
    Y_7&= \theta_{7,000000} + \theta_{7,010000}Y_2+\theta_{7,000001}Y_6+\theta_{7,010001}Y_2Y_6+v_7, 
    \end{align*}
     while carbon emissions depend on energy usage,
    \begin{align*}
    Y_8&=\theta_{8,0000000}+\theta_{8,0000010}Y_6+v_8, 
        \end{align*}
        and investment costs are primarily driven by capital investments,  
    \begin{align*}
    Y_9&=\theta_{9,00000000} +\theta_{9,00001000}Y_5+v_9.
\end{align*}
Further we have,
\begin{align*}
Y_6 &= \theta_{6,00000} + \theta_{6,01000}Y_2+\theta_{6,00010}Y_4+\theta_{6,00100}Y_3+\theta_{6,00001}Y_5+\theta_{6,00101}Y_3Y_5+v_6,\\
Y_5&=\theta_{5,0000}+\theta_{5,0010}Y_3+v_5,\\
Y_4&=\theta_{4,000}+\theta_{4,100}Y_1+v_4.
\end{align*}
We omit the specification of statistical models for founder nodes, since we are interested in demonstrating how the set of all expectations of monomials can be determined by applying Algorithm \ref{alg:cap}. Start by considering the following set of expectation of monomials and the set of their associated exponent vectors:
\begin{align*}
&\Lambda_{10}^+ = \{ \E{Y_7}, \E{Y_8}, \E{Y_9}\}, \\ &A_{10}^+=\{(0, 0, 0, 0, 0, 0, 1, 0, 0), (0, 0, 0, 0, 0, 0, 0, 1, 0), (0, 0, 0, 0, 0, 0, 0, 0, 1) \}.
\end{align*}
In other words, $Q_{10}$ sends the requests to $G_7$, $G_8$ and $G_9$ to provide the decision centre with $\E{Y_7}$, $\E{Y_8}$ and $\E{Y_9}$ respectively. Based on these requests from $Q_{10}$ and topology of the graph, we can derive
\begin{align*}
    &\Lambda_9^-=\{\E{Y_9} \}, \quad A_9^-=\{(0, 0, 0, 0, 0, 0, 0, 0, 1) \},\\
    &\Lambda_8^-=\{\E{Y_8} \}, \quad A_8^-=\{(0, 0, 0, 0, 0, 0, 0, 1, 0) \},\\
    &\Lambda_7^-=\{\E{Y_7} \}, \quad A_7^-=\{(0, 0, 0, 0, 0, 0, 1, 0, 0) \}.
\end{align*}
To donate these moments to the decision centre, panels $Q_9$, $Q_8$ and $Q_7$ in turn need to request
\begin{align*}
    &\Lambda_{5\rightarrow 9}=\{\E{Y_5} \}, \quad A_{5\rightarrow 9}=\{(0, 0, 0, 0, 1, 0, 0, 0, 0) \},\\
    &\Lambda_{6\rightarrow 8}= \{\mathbb{E}[Y_6] \}, \quad A_{6\rightarrow 8}=\{(0, 0, 0, 0, 0, 1, 0, 0, 0) \},\\
    &  \Lambda_{2\rightarrow 7} = \{\E{Y_2} \}, \quad A_{2\rightarrow 7} = \{(0, 1, 0, 0, 0, 0, 0, 0, 0) \},\\
    &\Lambda_{6\rightarrow 7} = \{\E{Y_6}, \E{Y_2Y_6} \}, \quad A_{6\rightarrow 7} =\{(0, 0, 0, 0, 0, 1, 0, 0, 0), (0, 1, 0, 0, 0, 1, 0, 0, 0) \}.
\end{align*}
Therefore we can derive the set of expectations of monomials requested by each panel
$$
    \Lambda_9^+=\{ \E{Y_5}\},
    \Lambda_8^+=\{\E{Y_6} \}, \Lambda_7^+=\{\E{Y_2}, \E{Y_6}, \E{Y_2Y_6}\}.
$$
We then consider the set of expectations of monomials donated by $Q_6$ and the set of their associated exponent vectors:
$$
\Lambda_{6}^- = \{\E{Y_6}, \E{Y_2Y_6} \}, \quad A_{6}^- =\{(0, 0, 0, 0, 0, 1, 0, 0, 0), (0, 1, 0, 0, 0, 1, 0, 0, 0) \}.
$$
Then the moments requested by panel $Q_6$ to perform the donations above are 
\begin{align*}
   & \Lambda_{2\rightarrow 6}=\{\mathbb{E}[Y_2], \mathbb{E}[Y_2^2] \}, \
    \Lambda_{3\rightarrow 6}=\{\mathbb{E}[Y_3], \mathbb{E}[Y_2Y_3] \},\
     \Lambda_{4\rightarrow 6}=\{\mathbb{E}[Y_4], \mathbb{E}[Y_2Y_4] \}, \\
 &   \Lambda_{5\rightarrow 6}=\{\mathbb{E}[Y_5], \mathbb{E}[Y_2Y_5], \mathbb{E}[Y_3Y_5], \mathbb{E}[Y_2Y_3Y_5] \}.
\end{align*}
that can be combined to produce the set of all expectations requested by the panel $Q_6$:
$$
\Lambda_6^+=\{\E{Y_2}, \E{Y_2^2}, \E{Y_3}, \E{Y_2Y_3}, \E{Y_4}, \E{Y_2Y_4}, \mathbb{E}[Y_5], \mathbb{E}[Y_2Y_5], \mathbb{E}[Y_3Y_5], \mathbb{E}[Y_2Y_3Y_5]\}.
$$
We proceed to panel $Q_5$ and derive the set of expectations that the panel needs to deliver:
\begin{align*}
    \Lambda_{5}^-=\{\mathbb{E}[Y_5], \mathbb{E}[Y_2Y_5], \mathbb{E}[Y_3Y_5], \mathbb{E}[Y_2Y_3Y_5] \}.
\end{align*}
Then the moments requested by panel $Q_5$,
\begin{align*}
    \Lambda_{2\rightarrow 5}=\{\mathbb{E}[Y_2] \}, \
    \Lambda_{3\rightarrow 5}=\{\mathbb{E}[Y_3], \mathbb{E}[Y_3^2], \mathbb{E}[Y_2Y_3], \mathbb{E}[Y_2Y_3^2] \},
\end{align*}
with $\Lambda_5^+=\{\mathbb{E}[Y_2], \mathbb{E}[Y_3], \mathbb{E}[Y_3^2], \mathbb{E}[Y_2Y_3], \mathbb{E}[Y_2Y_3^2]\}$.
We consider the set of expectations that panel $Q_4$ needs to donate $\Lambda_4^-=\{\E{Y_4}, \E{Y_2Y_4} \}$ and in turn the moments requested by panel $Q_4$,
\begin{align*}
    \Lambda_{1\rightarrow 4}=\{\mathbb{E}[Y_1] \}, \
    \Lambda_{2\rightarrow 4}=\{\mathbb{E}[Y_2], \mathbb{E}[Y_1Y_2] \}.
\end{align*}
As previously, the sets can be combined together to obtain the set of all expectations of monomials requested by panel $Q_4$, i.e. $\Lambda_4^+=\{\E{Y_1}, \E{Y_2}, \E{Y_1Y_2} \}$.

Within this graphical model the founder nodes $Y_1$, $Y_2$ and $Y_3$ have been assumed to be marginally independent (by Assumption \ref{ass_indep}). We can therefore specify 
$$
\Lambda_3^-=\{\E{Y_3}, \E{Y_3^2}, \E{Y_2Y_3}, \E{Y_2Y_3^2} \},
$$
which means that $Q_3$ needs to request $\Lambda_{2\rightarrow 3} = \{\mathbb{E}[Y_2] \}$. Similarly, we can derive that $\Lambda_{1\rightarrow 2} = \{\mathbb{E}[Y_1] \}.$ After all the requests have been sent, we enter the donate phase and assemble the requested expectations that need to be produced by each panel to be passed forward to compute utility scores. 
This is a primary example for decision making where there are many panels, and it highlights advantages of the modularity imposed by the graphical model and the form of utility function.

 
 \subsection{Clinical decision support system}
\label{subsec:blood_gas}
 
Consider a decision problem centered around improving a patient's respiratory condition in the ICU. We provide a demonstration of a clinical decision support system (CDSS) based on the proposed algorithm for propagating moments. This CDSS incorporates expert knowledge in the algorithm, combining the capabilities of two classes of CDSSs, i.e., knowledge-based and non-knowledge based \cite{Sutton2020}. In particular, the CDSS is composed of the algorithm used to model the decision based on the relationships between clinical variables of interest together with the utility function elicited from medical experts. This inference engine can then be applied to the patient's clinical data to generate an output or action.

For a patient, we investigate the effectiveness of two decisions $\mathcal{D}=\{d_{0}, d_{1} \}$ on their respiratory condition; keeping a patient on the low ventilator settings ($d_{0}$) or increasing the ventilator settings ($d_{1}$). During normal breathing, if CO\textsubscript{2} production increases, the respiratory minute volume, the volume of gas inhaled or exhaled from a person's lung per minute, is increased to reduce partial CO\textsubscript{2}. In a fully mechanically ventilated patient, minute volume is completely controlled by clinicians, and therefore used to regulate pH and partial CO\textsubscript{2} levels by changing pressure/volume/rates in ventilator settings. Nevertheless, it's crucial to recognise that for patients with severely compromised lung function, increased pressure, volume, or rates in ventilator settings might worsen lung damage. In addition, oxygen delivery is crucial in managing a patient's respiratory condition. Clinicians typically rely on SpO\textsubscript{2} measurements, haemoglobin oxygen saturations, to decide on oxygen therapy needs. However, SpO\textsubscript{2} levels alone may not accurately reflect oxygen delivery due to variations caused by changes in local and systemic chemical characteristics, such as the pH or temperature surrounding the red cell containing haemoglobin. This means that high SpO\textsubscript{2} levels do not guarantee optimal oxygen delivery to cells, and relying solely on SpO\textsubscript{2} for treatment decisions can lead to over or under-treatment, potentially causing harm. Therefore we include the p50 value, the oxygen tension at which haemoglobin is saturated with oxygen at 50\%, as a summary measure of haemoglobin-oxygen affinity. Haemoglobin-oxygen affinity describes the relationship between haemoglobin oxygen saturation (SpO\textsubscript{2}) and the partial pressure of oxygen (PO\textsubscript{2}) and is represented by oxygen dissociation curve (ODC). The ODC is not static, in particular, changes in pH, temperature, 2,3-DPG and partial pressure of carbon dioxide in the blood (PCO\textsubscript{2}) can cause a shift in ODC, and during these shifts the most marked alternations occur in the middle part of the curve around p50 value \cite{Morgan1999}.

Based on this information as part of utility function specification, we consider abnormal deviations from the reference values in three clinical measurements that are mostly impacted by poor respiratory condition: partial pressure of carbon dioxide, pH and hemoglobin-oxygen affinity measure. The physiological variables have been modified to depict the abnormal changes, in particular $Y_{1t}=\ln\Big(\frac{\text{pCO\textsubscript{2}}_t}{5.33} \Big)$, $Y_{2t} = (\text{T}_t-37)$, $Y_{3t} =(\text{pH}_t-7.40)$, $Y_{4t}=\text{cDPG}_t-5$ and  $Y_{5t} = (p50_t-3.47)$, based on the specifications from blood gas machine \cite{Siggaard1974}. The conditional independence structure is depicted in Figure \ref{fig:ODC}, where it can be seen  
that changes in pH  ($Y_{3t}$), temperature T ($Y_{2t}$), partial CO\textsubscript{2} ($Y_{1t}$) and cDPG, which is a proxy to 2.3-DPG concentrations, ($Y_{4t}$) have an effect on p50 value ($Y_{5t}$), haemoglobin-oxygen affinity measure \cite{Siggaard1974}.

Since we observe all data simultaneously, we assume all the data available up to $t-1$, generating possible decision options for time $t$. An appropriate utility function could take the form:

\begin{equation}
    U(\bb{Y}_t, d) = -k_1Y_{1t} + k_3Y_{3t}-k_5Y_{5t} - k_d\rho(d),
\end{equation}
where the criterion weights $k_d$ and $k_i\in (0,1)$, and $\rho(d)$ is a proxy measure for lung damages associated with decision $d$. 

\begin{figure}[h!]
\begin{center}
 \resizebox{0.4\textwidth}{!}{%
    \begin{tikzpicture}[roundnode/.style={circle, very thick, minimum size=7mm,  draw=black},
    ]  
    \node[] (a) at (1,0) {pCO\textsubscript{2} ($Y_{1t}$) };
    \node[] (c) at (2, -2){ pH ($Y_{3t}$)};
    \node[] (d) at (2, -4) {cDPG ($Y_{4t}$)};
    \node[] (f) at (6, -2) {p50 ($Y_{5t}$)};
    \node[] (g) at (1, -6) {T ($Y_{2t}$)};
    \draw[->] (a) edge (f);
    \draw[->] (a) edge (c);
    \draw[->] (d) edge (f);
    \draw[->] (c) edge (d);
    \draw[->] (c) edge (f);
    \draw[->] (g) edge (d);
    \draw[->] (g) edge (f);
\end{tikzpicture} 
}
\end{center}
\caption{Graphical model representing changes in haemoglobin-oxygen affinity.}
\label{fig:ODC}
\end{figure}
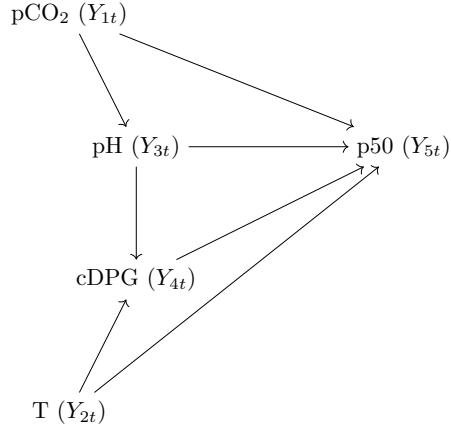

All variables are measured every 4-6 hours from the patient's arrival in the ICU to recovery/death. We therefore consider temporal models for multivariate time series  using the Multidimensional regression model (MDM) \cite{Queen1993}. 
Under the MDM, the joint distribution breaks down into separate components, each of which consists of conditional distributions that are univariate dynamic linear regression models (DLMs) \cite{West1997}. We specify observation equations for the variables in Figure \ref{fig:ODC}:
\begin{align*}
     Y_{5t}&=\theta_{5t,0000} +\theta_{5t,1000}Y_{1t}+\theta_{5t,0100}Y_{2t} +\theta_{5t,0010}Y_{3t}+\theta_{5t,0001}Y_{4t}\\
&+\theta_{5t,1010}Y_{1t}Y_{3t}+\theta_{5t,0101}Y_{2t}Y_{4t}+\theta_{5t,0011}Y_{3t}Y_{4t} + v_{5t},
\end{align*}
and for the remaining non-founder nodes we have:
\begin{align*}
    Y_{4t} = \theta_{4t,000} + \theta_{4t,010}Y_{2t}+\theta_{4t,001}Y_{3t} + v_{4t},\quad
    Y_{3t}=\theta_{3t,0} + \theta_{3t,1}Y_{1t} + v_{3t}.
\end{align*}
We also specify a simple steady state DLM \cite{West2006} for the founder nodes,  
\begin{align*}
    Y_{2t}=\theta_{2t,0} + v_{2t}, \quad
    Y_{1t}= \theta_{1t,0} + v_{1t}.
\end{align*}
As part of DLM specification, we need to define the temporal evolution of the system equations,
\begin{equation*}
    \bb{\theta}_{it} = G_{it}\bb{\theta}_{it-1} + \bb{\omega}_{it},
\end{equation*}
with $v_{i,t}\sim N(0, V_{it})$ and $\bb{\omega}_{it}\sim N(\bb{0}_d, W_{it})$. We assume that the errors in observation and system equations are independent, and $G_{it}$ is an identity matrix. To fit the DLMs, we use the \texttt{R} library \texttt{dlm} available from \texttt{CRAN} \cite{Petris2010}.

\label{subsec:blood_gas}
\begin{figure}[h!]
\begin{center}
\includegraphics[scale=0.6]{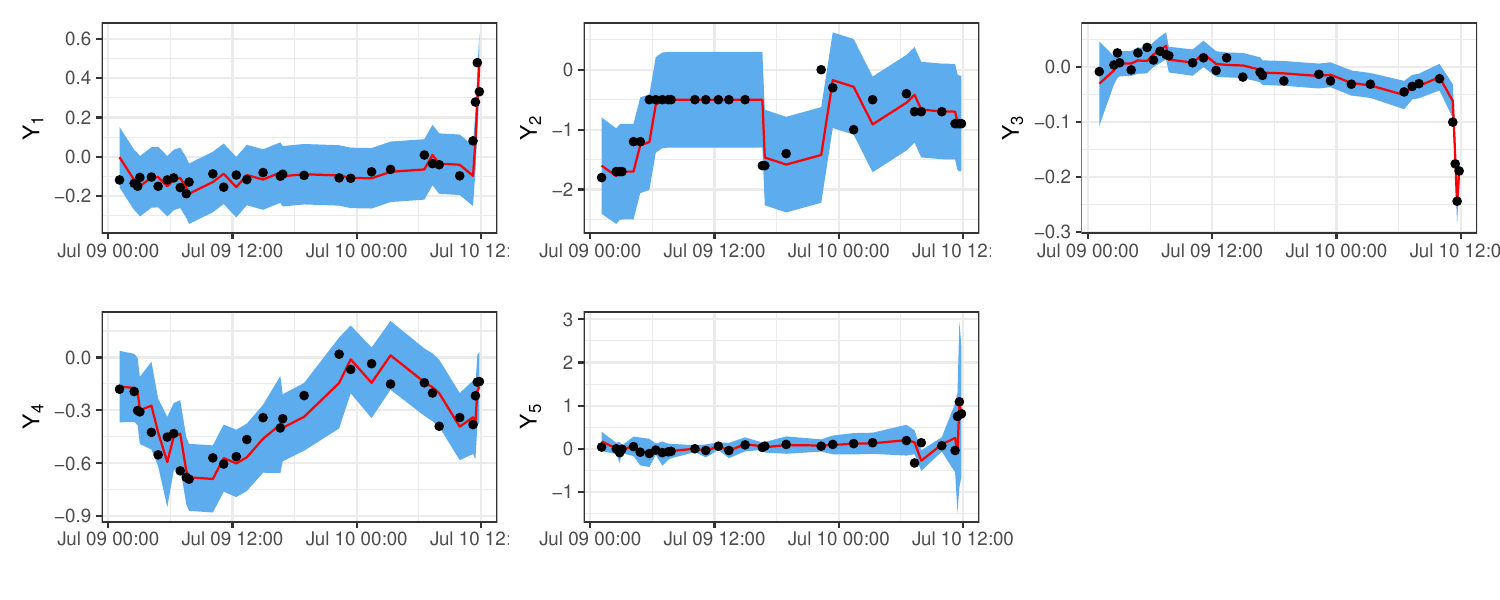}
\end{center}
\caption{Variables composing the CDSS and dynamic regression models' one-step-ahead predictions (mean and two standard deviation prediction intervals).}
\label{fig:algebraic_plot}
\end{figure}

The data were observed from time 1 to $T$, and clinicians have the choice to keep the patient on a low ventilator setting or increase the ventilator setting at $T+1$.
In Table \ref{tab:moments_clinical}, we list all the terms required to compute the expected utility function where expectations are conditioned on data derived by using Algorithm \ref{alg:cap} with Proposition \ref{th:polynomialder}. 

To compute the higher-order moments of the regression parameters in individual DLMs we can obtain these moments analytically using the moment-generating function (MGF) as they  are Multivariate Normal distributed.
Table \ref{tab:decision_clinician} provides probabilistic specifications that depend on the decision taken. In our analysis, we also specify equal criterion weights, $k_{1} = k_{2} = k_{3} = k_{d}=0.25$. Given these beliefs and parameter estimates, the expected utility score for this decision equals -0.0569, whereas for $d_{0}$ it is -0.465. Then the CDSS would suggest increasing the ventilator setting for the patient at time $T+1$. 

\begin{table}[h!]
\caption{Terms required to compute expected utility.}
\label{tab:moments_clinical}
 \begin{tabular}{|l l|} 
 \hline
$-k_1\E{\theta_{1t,0}}$ & $k_3\E{\theta_{3t,0}}$ \\ $k_3\E{\theta_{3t,1}}\E{\theta_{1t,0}}$ &
$-k_5\E{\theta_{5t,0000}}$ \\$-k_5\E{\theta_{5t,1000}}\E{\theta_{1t,0}}$ & 
$-k_5\E{\theta_{5t,0100}}
\E{\theta_{2t,0}}$ \\  $-k_5\E{\theta_{5t,0010}}\E{\theta_{3t,0}}$ &
$-k_5\E{\theta_{5t,0010}}\E{\theta_{3t,1}}\E{\theta_{1t,0}}$  \\
$-k_5\E{\theta_{5t,0001}}\E{\theta_{4t,000}}$ & $-k_5\E{\theta_{5t,0001}}\E{\theta_{4t,010}}\E{\theta_{2t,0}}$  \\
$-k_5\E{\theta_{5t,0001}}\E{\theta_{4t,001}}\E{\theta_{3t,0}}$  & $-k_5\E{\theta_{5t,0001}}\E{\theta_{4t,001}}\E{\theta_{3t,1}}\E{\theta_{1t,0}}$ \\ $-k_5\E{\theta_{5t,1010}}\E{\theta_{3t,0}}\E{\theta_{1t,0}}$  & 
$-k_5\E{\theta_{5t,1010}}\E{\theta_{3t,1}}\E{\theta_{1t,0}^2}$ \\
$-k_5\E{\theta_{5t,1010}}\E{\theta_{3t,1}}V_{1t}$ &
$-k_5\E{\theta_{5t,0101}}\E{\theta_{4t,000}}\E{\theta_{2t,0}}$ \\$-k_5\E{\theta_{5t,0101}}\E{\theta_{4t,010}}\E{\theta_{2t,0}^2}$  &  $-k_5\E{\theta_{5t,0101}}\E{\theta_{4t,010}}V_{2t}$ \\ $-k_5\E{\theta_{5t,0101}}\E{\theta_{4t,001}}\E{\theta_{3t,0}}\E{\theta_{2t,0}}$ 
& $-k_5\E{\theta_{5t,0101}}\E{\theta_{4t,001}}\E{\theta_{3t,1}}\E{\theta_{2t,0}}\E{\theta_{1t,0}}$ \\
$-k_5\E{\theta_{5t,0011}}\E{\theta_{4t,000}}\E{\theta_{3t,0}}$  & $-k_5\E{\theta_{5t,0011}}\E{\theta_{4t,000}}\E{\theta_{3t,1}}\E{\theta_{1t,0}}$  \\
$-k_5\E{\theta_{5t,0011}}\E{\theta_{4t,010}}\E{\theta_{3t,0}}\E{\theta_{2t,0}}$  &   $-k_5\E{\theta_{5t,0011}}\E{\theta_{4t,001}}\E{\theta_{3t,0}^2}$ \\
$-2k_5\E{\theta_{5t,0011}}\E{\theta_{4t,001}}\E{\theta_{3t,0}\theta_{3t,1}}\E{\theta_{1t,0}}$  & 
$-k_5\E{\theta_{5t,0011}}\E{\theta_{4t,001}}\E{\theta_{3t,1}^2}\E{\theta_{1t,0}^2}$  \\  $-k_5\E{\theta_{5t,0011}}\E{\theta_{4t,001}}\E{\theta_{3t,1}^2}V_{1t}$  & 
$-k_5\E{\theta_{5t,0011}}\E{\theta_{4t,001}}V_{3t}$ \\
$-k_5\E{\theta_{5t,0011}}\E{\theta_{4t,010}}\E{\theta_{3t,1}}\E{\theta_{2t,0}}\E{\theta_{1t,0}}$   & $- k_d\E{\rho(d)}$
\\ [1ex] 
 \hline
 \end{tabular}
\end{table}

\begin{table}[h!]
\caption{Probabilistic specifications that depend on the decision taken by clinicians}
\label{tab:decision_clinician}
\begin{center}
 \begin{tabular}{|c | c c c|} 
 \hline
& $\E{\theta_{1t,0}}$  & $\E{\theta_{3t,0}}$ & $\E{\rho(d)}$ \\
\hline
$d_{0}$ & 0.478 & -0.065 & 0.05\\
$d_{1}$ & 0.05 & 0.01 & 0.1
\\ [1ex] 
 \hline
 \end{tabular}
\end{center}
\end{table}

 \section{Concluding remarks}
\label{sec:discussion}
Probabilistic graphical models have been widely used to describe complex dependencies among variables. To perform inference, fast calculation of conditional and marginal probabilities on variables is required, motivating efficient algorithms for propagating information through the graph. We propose a method for a formal Bayesian decision analysis that combines polynomial regression models for individual variables within a graphical framework with moment propagation. The proposed approach scales up and captures complex relationships between individual variables and their parents. These conjectured relationships and the form of the utility allow the calculation of the expected utility scores by the decision-maker to discover optimal policies as a function of lower-order moments. This approach is distribution-free and does not require the full mixing associated with the full probability distribution of each model. We present a message-passing algorithm to allow efficient propagation of information for calculating scores and demonstrate how our approach can be used in decision-support in an energy planning domain and in a clinical setting.

Each vertex is a model developed by a panel of experts, and individually, no single panel can see what features are important in the final decision. Traditionally expert judgement at a panel may provide mean and variance, but as our network model results demonstrate, by coupling models in the graphical structure, it highlights that moments that are important to the decision maker often go beyond the mean and variance.The simplicity of our approach demonstrates the feasibility of the method, which can be extended to dynamic models in discrete time, where the complexity grows linearly. We give the exact evaluation of scores under specific assumptions, in which the methods can be used to approximate scores for different policy options. Of course, the sensitivity of these scores and such approximations would then need to be studied, a topic worthy of future research.

\section*{Acknowledgements}
The authors acknowledge Professor Henry Wynn (London School of Economics) for his valuable comments about the earlier versions of this work. The authors would like to thank Professor Chris Dent (University of Edinburgh) for his help with the energy planning example. The authors also acknowledge Dr. Samiran Ray (Great Ormond Street Hospital for Children NHS Foundation Trust) for providing data and expertise to construct clinical decision support system. 

This work was supported by the Additional Funding Programme for Mathematical Sciences, delivered by EPSRC (EP/V521917/1) and the Heilbronn Institute for Mathematical Research.

\bibliographystyle{plain}  
\bibliography{library}

\end{document}